% for two columns
%\documentclass[10pt,journal,a4paper,final,oneside,twocolumn]{IEEEtran}
% for review submission
\documentclass[10pt,journal,a4paper,final,oneside,twocolumn]{IEEEtran}
% for final submission
%\documentclass[12pt,journal,a4paper,draft,oneside,onecolumn]{IEEEtran}
%\documentclass[12pt,journal,a4paper,draftcls,twoside,onecolumn]{IEEEtran_v15}
%\documentclass[12pt,journal,a4paper,draftcls,twoside,onecolumn]{IEEEtran}
% for final submission

\usepackage{times} 
\usepackage{graphicx}
\usepackage{cite}
\usepackage{citesort}
\usepackage{color}
\usepackage{psfrag}
\usepackage{subfigure}
\usepackage{amssymb}
\usepackage{lmodern}
\usepackage{epsfig}
\usepackage{pifont}
\usepackage{amsmath}

\usepackage{cases}
\usepackage{array}
\usepackage{multicol}
\usepackage{multirow}
\usepackage[T1]{fontenc}
\usepackage{comment}
\usepackage{amsthm}
\usepackage{indentfirst}
\usepackage{cases}
\usepackage[ruled, lined, linesnumbered, commentsnumbered, longend]{algorithm2e}
\SetKwInOut{KwIn}{Input}
\SetKwInOut{KwOut}{Output}
\SetKwProg{Init}{Initialization}{:}{end}
\SetKwProg{QT}{[QT Iteration]}{}{end}
\SetKwProg{DT}{[DT Iteration]}{}{end}
\SetKwProg{LT}{[Bisection Iteration of $\eta_l$]}{}{end}
\graphicspath{{figures/}}
\interdisplaylinepenalty=2500
%\theoremheaderfont{\hspace*{\parindent}\bfseries}
%\usepackage[cmintegrals]{newtxmath}
%\interfootnotelinepenalty=10000

%\usepackage[bottom]{footmisc}
\theoremstyle{remark}
\newtheorem{remark}{Remark}
%\ifx\pdfoutput\undefined
%\usepackage{graphicx}
%\else
%\usepackage[pdftex]{graphicx}
%\fi

%\input epsf
%\ninept
%\renewcommand{\baselinestretch}{0.96}

\linespread{0.95}

\hyphenation{lists}

\newtheorem{lemma}{Lemma}

\makeatletter
\def\ifundefined{\@ifundefined}
\makeatother \setcounter{page}{1}

\begin{document}

\title{Spectral-Efficiency of Cell-Free Massive MIMO with Multicarrier-Division Duplex}

\author{Bohan Li, Lie-Liang Yang, {\em Fellow,
    IEEE}, Robert G. Maunder, {\em Senior Member, IEEE}, Songlin Sun, {\em Senior Member, IEEE}, Pei Xiao, {\em Senior Member, IEEE} \thanks{B. Li, L.-L. Yang and R. Maunder are with the School of Electronics
    and Computer Science, University of Southampton, SO17 1BJ,
    UK. (E-mail:
    bl2n18, lly, rm@ecs.soton.ac.uk,~http://www-mobile.ecs.soton.ac.uk/lly). S. Sun is with the School of Information and Communication Engineering, Beijing University of Posts and Telecommunications (BUPT). P. Xiao is with 5GIC \& 6GIC,
Institute for Communication Systems (ICS), University of Surrey, Guildford
GU2 7XH, UK. (Email: p.xiao@surrey.ac.uk)
  The project
    was supported in part by the EPSRC, UK, under Project EP/P034284/1 and EP/P03456X/1,
    and in part by the Innovate UK project.}}

\maketitle

\begin{abstract}
A multicarrier-division duplex (MDD)-based cell-free (CF) scheme, namely MDD-CF, is proposed, which enables downlink (DL) data and uplink (UL) data or pilots to be concurrently transmitted on mutually orthogonal subcarriers in distributed CF massive MIMO (mMIMO) systems. To demonstrate the advantages of MDD-CF, we firstly study the spectral-efficiency (SE) performance in terms of one coherence interval (CT) associated with access point (AP)-selection, power- and subcarrier-allocation. Since the formulated SE optimization is a mixed-integer non-convex problem that is NP-hard to solve, we leverage the inherent association between involved variables to transform it into a continuous-integer convex-concave problem. Then, a quadratic transform (QT)-assisted iterative algorithm is proposed to achieve SE maximization. Next, we extend our study to the case of one radio frame consisting of several CT intervals. In this regard, a novel two-phase CT interval (TPCT) scheme is designed to not only improve the SE in radio frame but also provide consistent data transmissions over fast time-varying channels. Correspondingly, to facilitate the optimization, we propose a two-step iterative algorithm by building the connections between two phases in TPCT through an iteration factor. Simulation results show that, MDD-CF can significantly outperform in-band full duplex (IBFD)-CF due to the efficient interference management. Furthermore, compared with time-division duplex (TDD)-CF, MDD-CF is more robust to high-mobility scenarios and achieves better SE performance. 
\end{abstract}

\begin{IEEEkeywords}
Cell-free, multicarrier-division duplex, time-division duplex, in-band full-duplex
\end{IEEEkeywords}

\IEEEpeerreviewmaketitle

\section{Introduction}\label{section:MDDCF:intro}

As one of the promising techniques integrating the advantages of cloud radio access network (C-RAN), massive multiple-input multiple-output (mMIMO) and coordinated multipoint (COMP), cell-free (CF)-mMIMO has attracted the growing attention from both academia and industry in recent years \cite{demir2021foundations}. The inter-cell interference, which imposes the main limit on the performance of the cell-edge users in cellular networks, is no longer intractable in the CF-mMIMO systems, as the result of no cell boundaries and user-centric operation, and hence leading to the possible seamless coverage \cite{ngo2017cell}. Furthermore, CF-mMIMO has been deemed as the most promising technique to support massive machine type communications in the future Internet-of-things (IoT) networks, owing to its better massive access performance and more flexible AP cooperation \cite{ke2020massive,ye2021deep}.

As it is evolved from the collocated mMIMO, CF-mMIMO systems still consider time-division duplex (TDD) as the dominant duplexing mode \cite{sanguinetti2019toward,interdonato2019ubiquitous}. However, TDD has its limitation, especially, when implemented in fast time-varying environment. First, in TDD, the guard period (GP) between DL and UL transmissions is indispensable. In this regard, when communication channels change quickly in high-mobility scenarios, UL training has to be frequently performed due to the shorter CT interval, and hence the portion of time for data transmission decreases \cite{li2021multicarrier}. Second, the successive transmission of DL and UL represses the explosive demand of asymmetric communications \cite{kim2020dynamic}. For instance, several users having high demand on DL data may have to wait for a single user with a low demand on UL data, which in turn causes the low utilization efficiency of the time-frequency resources. To mitigate the above-mentioned problems, in-band full-duplex (IBFD) has been envisaged as one of the candidate techniques in beyond 5G and 6G systems, owing to its potential to double the spectral-efficiency (SE), when compared with the half-duplex (HD) systems \cite{kolodziej2019band}. Nevertheless, the self-interference (SI) problem stands in the way of the practical implementation of IBFD, leading to the fact that IBFD can only outperform HD when powerful and efficient SI cancellation (SIC) approaches operated in propagation-, analog- and digital-domain are available, which is however at the expense of high complexity and system resources \cite{goyal2015full,shende2017half,mirza2018performance}. 

Recently, multicarrier-division duplex (MDD) has been proposed as a potential alternative for shift from HD to IBFD \cite{yang2009multicarrier}, which allows systems to render the FD operations in the same time slot and the same frequency band but different subcarriers without the cost of GP. Although MDD still suffers from SI due to the FD operation on the same time-frequency resources, the requirement for the digital-domain SI is largely relaxed with the aid of the FFT operation implemented in the built-in analog-to-digital converter (ADC) of receiver chains \cite{li2020self}. To show the merits of MDD, the comparison of MDD, HD and IBFD has been comprehensively studied in \cite{li2021resource,li2021multicarrier}, when potentially practical scenarios are considered.
%In MDD systems, the subcarriers of one band are divided into two mutually exclusive subsets, namely, a DL and a UL subcarrier subset to support DL and UL transmissions, respectively. Due to this feature, the advantages of MDD on SI suppression, resource allocation and communication over fast time-varying channels are widely studied in mMIMO networks \cite{li2021resource,li2021multicarrier}.  

\subsection{Motivation}

To date, there have been a lot of works on the TDD-based CF (TDD-CF) \cite{ngo2017cell,du2021cell,interdonato2019ubiquitous,bjornson2020scalable}, which demonstrated the advantages of CF systems over the traditional small-cell systems in terms of throughput and reliability. In order to further improve the performance of TDD-CF, it is intuitively to integrate IBFD into the CF systems, namely IBFD-CF, thereby fully exploiting the resources in time- and frequency-domain compared to its HD counterparts. However, as analyzed in \cite{da2021full}, the inter-AP interference (IAI) and inter-mobile station (MS) interference (IMI) largely limit the performance of IBFD-based multi-AP or/and multi-MS systems.

To elaborate this point further, let us consider a basic example of IAI, as depicted in Fig. \ref{figure-MDDCF-SICdia}. Since in the IBFD-CF systems, the IAI coexists with the desired UL signal in the same resource element, the cancellation of IAI is similar to that of SI. In practice, to avoid large quantization noise and loss of effective ADC bits, the power of the analog-domain signal input to ADC should not exceed the noise floor by the dynamic range of the ADC. In other words, for a 12-bit ADC, which has approximately 42 dB of dynamic range \cite{bharadia2013full}, the maximum signal power input to the ADC is -52 dBm. According to the outdoor microcell PL model presented in \cite{3gpp2017further}, the large-scale fading can lead to over 60 dB attenuation, when two APs are separated by a distance of more than 10 meters. As a result, the IBFD-based AP in CF systems needs to provide 32 dB and 42 dB IAI suppression in propagation/analog-domain and digital-domain, respectively. 

\begin{figure}
\centering
\includegraphics[width=0.8\linewidth]{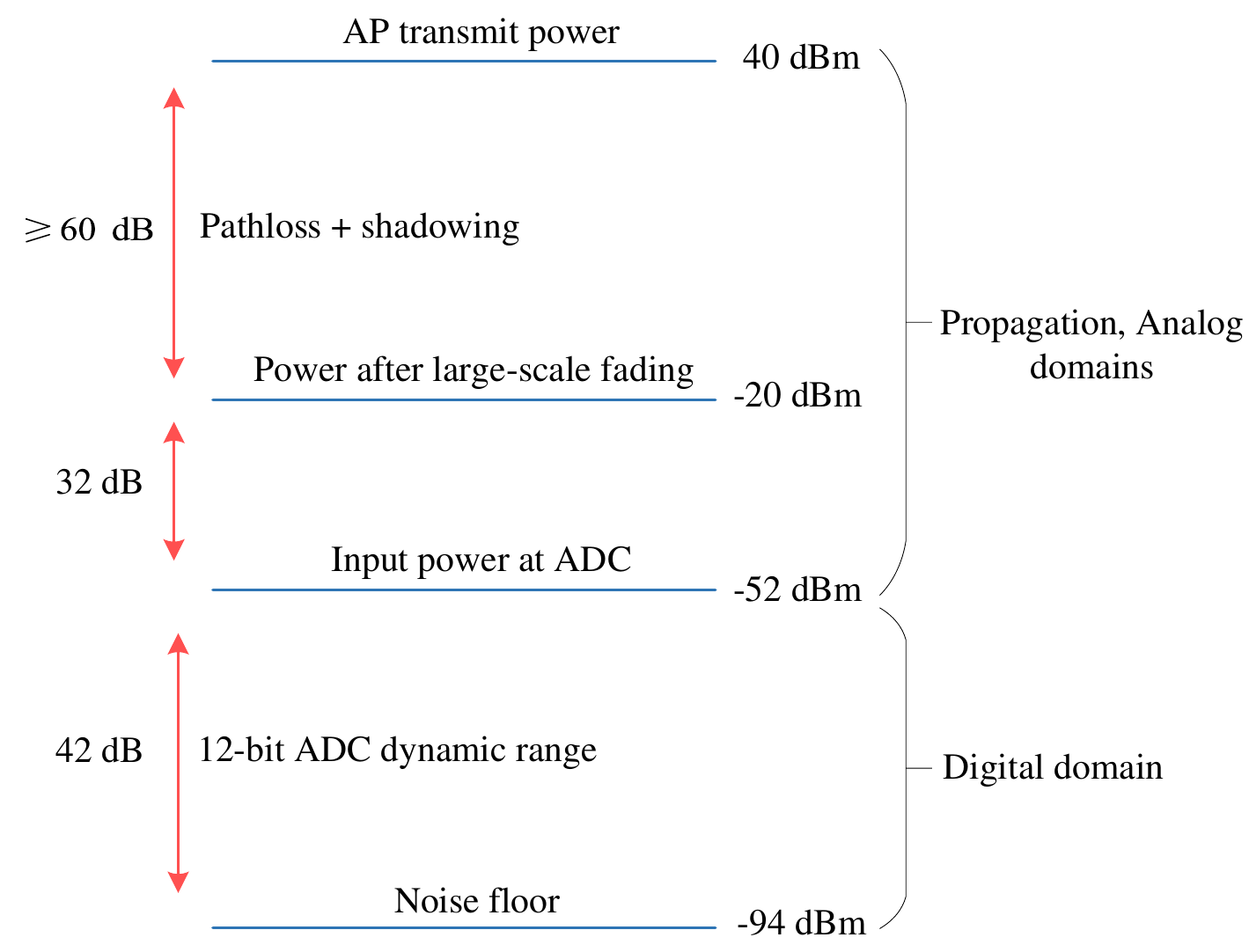}
\caption{An example to illustrate the IAI at receiver. }
\label{figure-MDDCF-SICdia}
\end{figure}

In fact, in comparison with the propagation/analog-domain IAI mitigation, the digital-domain IAI suppression is much harder to be achieved in IBFD-CF. This is because the former can be obtained with the aid of the power control and some passive approaches, such as path-loss, cross-polarization and antenna-directionality \cite{everett2014passive,sabharwal2014band}, while the latter, such as minimum mean square error (MMSE) and successive interference cancellation, are usually not efficient for mitigating the interference generated by a large number of APs from various directions in cell-based systems. For instance, in \cite{nguyen2020spectral}, authors studied the IBFD-CF in centralized CF-mMIMO systems, which can only provide a small amount of digital-domain IAI mitigation by using the coordinated precoding and successive interference cancellation at APs' receiver. In this case, the AP have to cut down its transmit power to decrease the influence of IAI on neighboring APs, which may lower the system performance. Noticeably, compared with IAI, the IMI problem is much more difficult to handle, as the baseband processors in MSs are not as powerful as those in APs. However, since the transmit power of MS is relatively small, the IMI can be largely mitigated in analog domain by power allocation and user scheduling \cite{kim2020dynamic}.

Furthermore, there have been some other works on FD-style CF, such as DTDD-based CF and network-assisted FD-based CF \cite{chowdhury2021can,xia2021joint}, which heavily rely on the central processing unit (CPU) to suppress the digital-domain IAI. However, these centralized CF schemes require the precise IAI channel estimation as well as a large capacity fronthaul, leading to the increased system overhead. Moreover, the influence of propagation/analog-domain IAI/IMI, as we elaborated in Fig. \ref{figure-MDDCF-SICdia}, are overlooked in \cite{chowdhury2021can,xia2021joint}. To the authors' best knowledge, despite the better scalability and less dependence on fronthaul \cite{demir2021foundations}, none of the existing papers study the distributed CF-mMIMO systems with FD-style operation, since the suppression of IAI/IMI can be highly intractable when APs work independently without sharing channel information with the CPU.

\subsection{Contributions}

To alleviate the problem of IAI/IMI, but maximize the time-frequency resource usage at the same time, in this paper, we propose an MDD-CF scheme. According to the principles of MDD, DL and UL transmissions take place at the same time but on different subcarriers. Therefore, the IAI/IMI are mutually orthogonal with the desired UL/DL signals in the digital domain, and can be easily removed during reception without any additional system overhead. 
%Due to this feature, once the IAI and IMI along with SI are suppressed to an allowable level (i.e., -52 dBm, as an example given in Fig. \ref{figure-MDDCF-SICdia}) for ADC operation by using the existing propagation/analog-domain SIC methods, all interference can be mitigated in digital domain by the FFT operation \cite{li2021multicarrier}. 
On the other hand, owing to the FD-style operation, MDD allows to concurrently implement DL transmission and UL training, which improves the SE even over fast time-varying channels. To leverage these advantages, however, one of the challenges is how to efficiently allocate the power and time-frequency resources among multiple APs and MSs for simultaneously supporting data transmission or training in two directions. To the best of our knowledge, there is no study in the open literature that has considered this kind of optimization problem. Therefore, against the background, we comprehensively study the AP-selection, power- and subcarrier-allocation issues in the MDD-CF, with the motivation to achieve the optimal SE performance. In summary, the novelties and contributions of this paper can be briefly described as follows:
%\ref{}
\begin{itemize}

\item In order to mitigate the interference to enable the FD-style operation in distributed CF-mMIMO, the MDD-CF scheme is proposed, where the effect of SI, IAI and IMI are practically modeled. Additionally, to study the optimization of AP-selection, power- and subcarrier-allocation under the constraint of MSs' quality of service (QoS), two application scenarios, which assume that DL/UL transmissions occur in one CT interval or in one radio frame, are considered. 

\item In the case of DL/UL transmissions in one CT interval, we firstly leverage the inner association between the continuous variables for power-allocation, and binary variables for AP-selection and subcarrier-allocation, to change the mixed-integer optimization into a continuous-integer convex-concave problem. Then, a quadratic transform (QT)-assisted iterative algorithm is proposed to achieve the SE maximization in MDD-CF scheme.

\item In the case of DL/UL transmissions in one radio frame, we consider imperfect channel estimation. A two-phase CT (TPCT) interval is designed for the CF systems operated in the MDD and IBFD modes. Since the two phases are tightly coupled and both of them support simultaneous transmissions in two directions, they lead to a very intricate formulation. Correspondingly, we introduce an iteration factor to build the connection between the two phases, and transform the original problem to a two-step iterative optimization with the aid of the bisection method. 

\item We comprehensively compare MDD-, IBFD- and TDD-CF in distributed CF-mMIMO systems under the practical network settings. Our simulation results demonstrate the superiority of MDD-CF over IBFD-CF, due to more effective suppression of IAI and IMI in the digital domain. Furthermore, the well-designed TPCT interval with the proposed two-step iterative algorithm enables MDD-CF to achieve much higher SE than TDD-CF in high-mobility communication scenarios.
\end{itemize}

Throughout the paper, the following notations are used: $\pmb{A}$,
$\pmb{a}$ and $a$ stand for matrix, vector, and scalar, respectively;
$\mathcal{A}$ and $\left|\mathcal{A}\right|$ represent the set and the cardinality of set, respectively; $(\pmb{a})_{i}$ denotes the $i$-th element of $\pmb{a}$; $\pmb{A}^{(i,:)}$, $\pmb{A}^{(:,j)}$ and $(\pmb{A})_{i,j}$ denote the $i$-th row, the $j$-th column and the $(i,j)$-th element of $\pmb{A}$, respectively; Furthermore, 
%$\bar{\pmb{A}}^{(i,:)}$, $\bar{\pmb{A}}^{(:,j)}$ and $\bar{\pmb{A}}^{(i,j)}$ are all sub-matrices of $\pmb{A}$ with $i$-th row, $j$-th column and both $i$-th row and $j$-th column removed, respectively.
$\left|\pmb{A}\right|$, $\pmb{A}^*$, $\pmb{A}^T$,
$\pmb{A}^{-1}$ and $\pmb{A}^H$ represent, respectively, the
determinant, complex conjugate, transpose, inverse and Hermitian transpose of $\pmb{A}$; The Euclidean norm of a vector and the Frobenius norm of a matrix are denoted as $\left\|\cdot\right\|_2$ and $\left\|\cdot\right\|_F$, respectively; $\pmb{I}_{N}$ denotes a $(N\times N)$ identity matrix;
$\mathcal{CN}(\pmb{0},\pmb{A})$ represents the zero-mean complex
Gaussian distribution with covariance matrix $\pmb{A}$; Furthermore,
$\text{Tr}(\cdot)$, $\text{ln}(\cdot)$ and $\mathbb{E}[\cdot]$ denote
the trace, natural logarithmic and expectation operators, respectively.

\section{System Model}\label{sec:MDD_CF:SM}

\begin{figure}
\centering
\includegraphics[width=0.8\linewidth,height=0.5\linewidth]{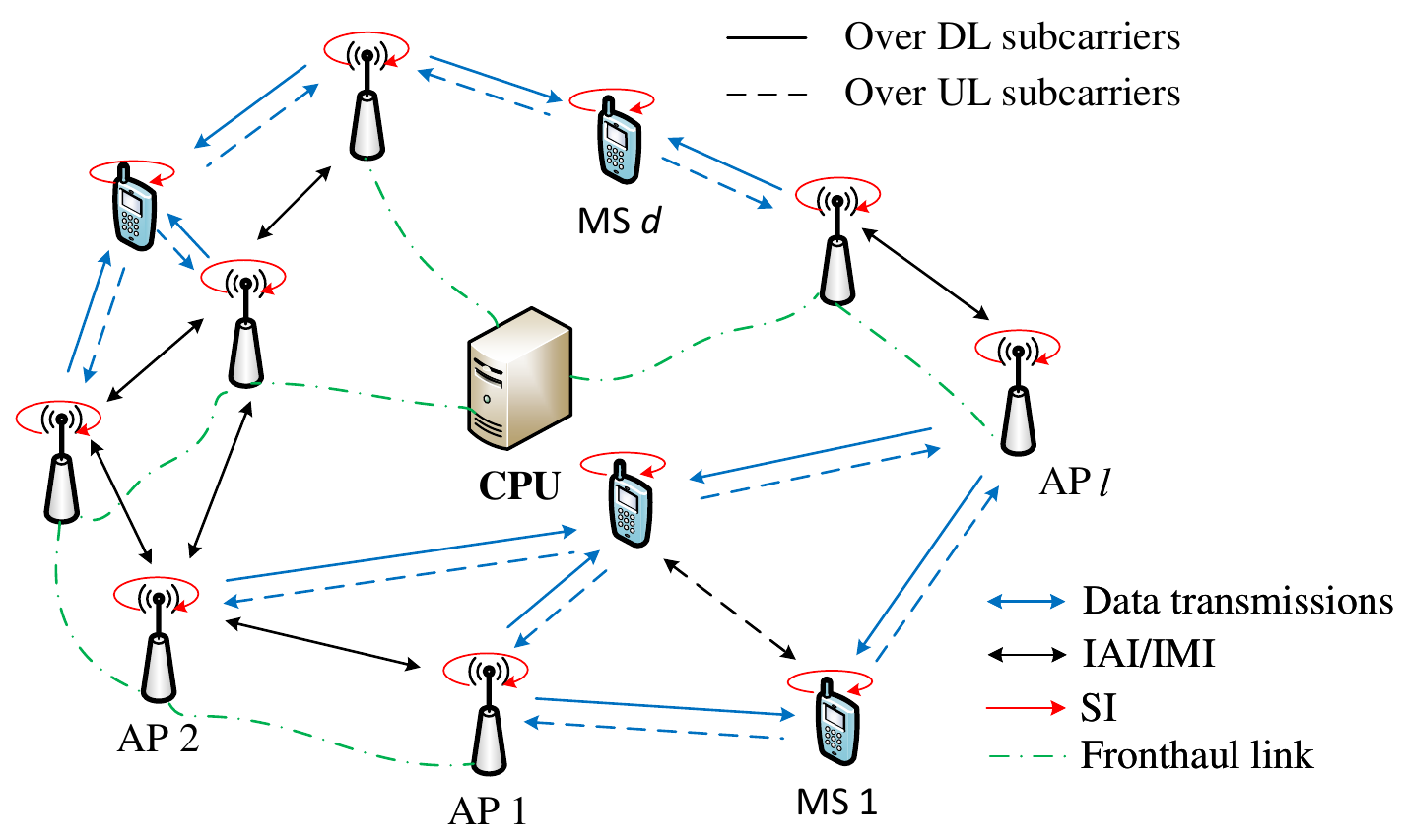}
\caption{Illustration of MDD-CF scheme.}
\label{figure-MDDCF-SM}
\end{figure}

We consider an MDD-CF scheme as shown in Fig. \ref{figure-MDDCF-SM}, where the set $\mathcal{D}=\left\{1,...,d,...,D\right\}$ of single-antenna MSs and the set $\mathcal{L}=\left\{1,...,l,...,L\right\}$ of APs of each with $N$ antennas are operated in the MDD mode that rely on the mutually orthogonal subcarrier sets \cite{li2020self}, namely $\mathcal{M}=\left\{1,...,m,...,M\right\}$ with $\left|\mathcal{M}\right|=M$ and $\mathcal{\bar{M}}=\left\{1,...,\bar{m},...,\bar{M}\right\}$ with $\left|\mathcal{\bar{M}}\right|=\bar{M}$, for DL and UL, respectively. The total number of subcarriers is $M_{\text{sum}} =M+\bar{M}$. Furthermore, we assume that the CF system is operated in a distributed way, where CPU offloads most of the tasks to APs to relieve its computation burden, and only sends coded data to APs for DL transmissions or integrates the received UL data from APs via fronthaul links without any knowledge of channel information. We assume that pilots and DL/UL data are transmitted in a coherence time (CT) interval $T_\text{c}$ in terms of OFDM symbols. As shown in Fig. \ref{figure-MDDCF-Coh}, in the TDD mode, which is deemed as the mainstream duplex mode in mMIMO systems \cite{sanguinetti2019toward}, signal transmissions are performed in sequence, where UL training, guard period (GP) and DL/UL transmissions require $\gamma^{\text{P}}$, $\gamma^{\text{G}}$, $\gamma_{\text{TDD}}^{\text{DL}}$ and $\gamma_{\text{TDD}}^{\text{UL}}$ symbol durations, respectively. Note that, in the TDD mode, the GP between UL and DL is indispensable. By contrast, in both IBFD and MDD modes, after the same training time $\gamma^{\text{P}}$ as that in the TDD mode, DL and UL data transmission time are $T_\text{c}-\gamma^{\text{P}}$, which can be much higher than that in the TDD mode.
%Compared with IBFD, MDD systems perform FD operation in the same time slot and same frequency band but on different subcarriers. Due to this feature, the stringent requirements of SIC can be mitigated in digital domain by FFT operation \cite{rajashekar2019multicarrier}, and hence only propagation- and analog-domain SIC are needed, which significantly reduces the overhead of systems running FD mode.  

\begin{figure}
\centering
\includegraphics[width=0.85\linewidth,height=0.5\linewidth]{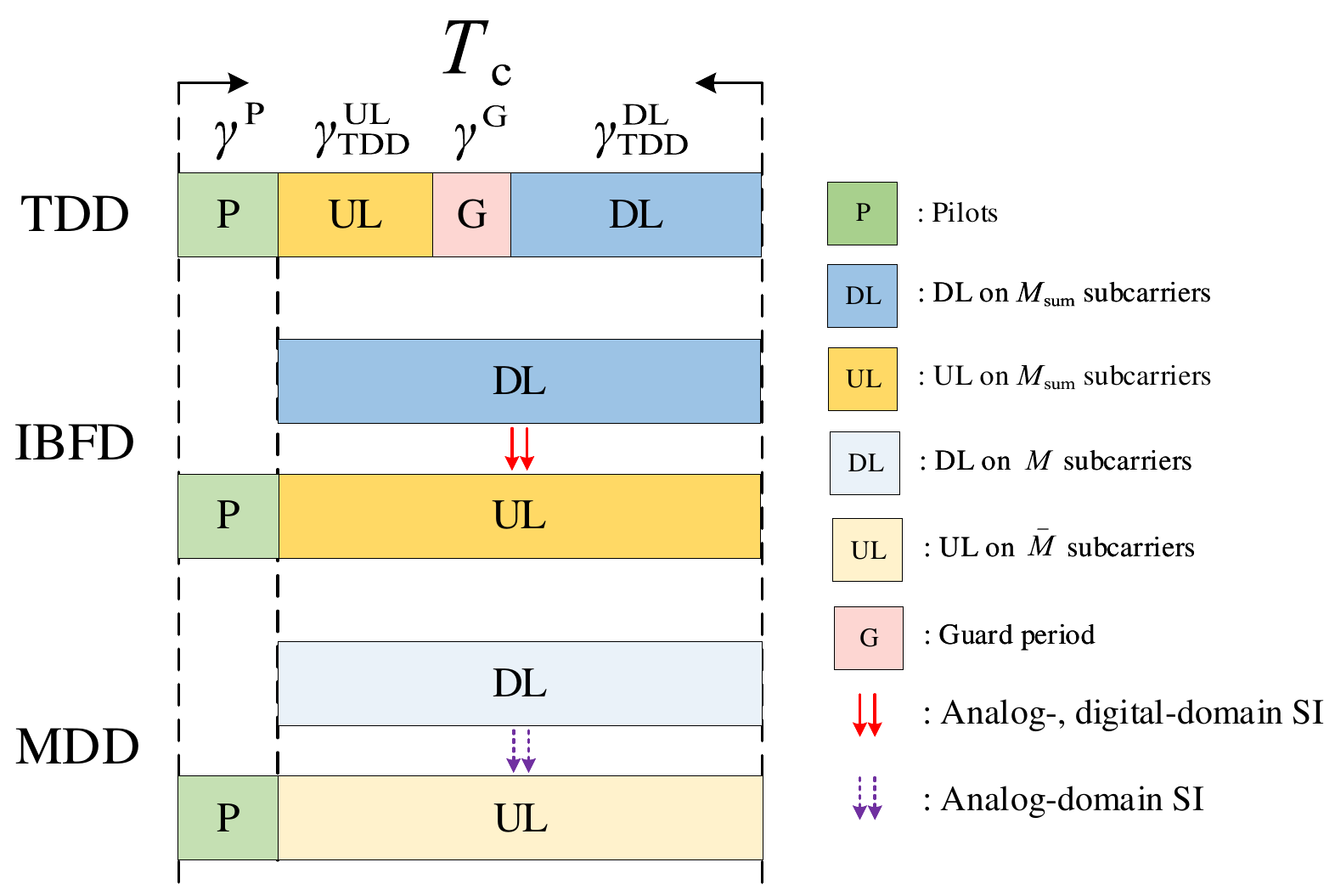}
\caption{Transmission in a coherence time interval of TDD-, IBFD- and MDD- schemes.}
\label{figure-MDDCF-Coh}
\end{figure}

\subsection{Channel Model}
%As in our proposed CF network, both APs and MSs transmit/receive data in MDD mode, the case of interference is much more complicated than TDD systems. 
For the convenience of notation, we denote the SI channel at the $l$-th AP and $d$-th MS by $\pmb{H}_{ll} \in \mathbb{C}^{N \times N}$ and $h_{dd}$, respectively. These two SI channels are modeled as
\begin{equation}\label{eq:MDDCF:SICh}
\begin{split}
\left(\pmb{H}_{ll}\right)_{i,j}&=\sqrt{\xi_l^{\text{SI}}} \ \alpha_s,  \\
h_{dd} &= \sqrt{\xi_d^{\text{SI}}} \ \alpha_s,
\end{split}
\end{equation}
where $\alpha_s\sim \mathcal{CN}(0,1)$ is the small-scale fading, $\xi_l^{\text{SI}}$ and $\xi_d^{\text{SI}}\in(0,1]$ denote the residual SI levels at AP and MS receivers, respectively. We assume that $\xi_l^{\text{SI}}<\xi_d^{\text{SI}}$, meaning that the SI at AP receiver can be mitigated to a lower level than that at MS receiver. This is because APs are capable of employing more complicated circuits and higher resource budget for SI suppression than MSs.
\begin{remark}
We consider two assumptions for the SI channels. 1) Since the SI link is relatively short in comparison to the AP-MS link, we assume SI link is of single-path without suffering from the large-scale fading; 2) The residual SI level ($\xi_l^{\text{SI}}/\xi_d^{\text{SI}}$) relies on the SIC capability, which accounts for the propagation/analog- and digital-domain SIC methods\footnote{As we mentioned before, MDD systems can be nearly free from the SI in digital domain due to the FFT operation. By contrast, in order to mitigate the digital-domain SI in IBFD-based systems, the receiver has to estimate the channel between DAC and ADC, and then reconstruct the transmitted signal, which is subsequently subtracted from the received signal \cite{yilan2021integrated,ahmed2015all}. This process incurs large overhead, especially when the system is operated with a large number of subcarriers.}, such as dual-port polarized antenna \cite{debaillie2014analog}, antenna circulator \cite{bharadia2013full} and multi-tap RF canceller \cite{kolodziej2016multitap}.
%e.g., dual-port polarized antenna \cite{debaillie2014analog}, antenna circulator \cite{bharadia2013full}, multi-tap RF canceller \cite{kolodziej2016multitap} and digital methods \cite{yilan2021integrated,ahmed2015all}. It is worth noting that thanks to the FFT operation in dealing with digital-domain SI at nearly no extra expense\footnote{Note that the quantization error, phase noise and nonlinearity noise caused by SI signal are hardly suppressed by FFT operation in digital-domain, and the cancellation usually relies on the conventional methods proposed in \cite{ahmed2015all} instead. For simplicity, we ignore these parts of SI for both IBFD and MDD systems.}, {\color{red}the residual SI level of MDD-based systems is much lower than that of IBFD-based systems, i.e., $\xi_d^{\text{MDD-SI}}<\xi_d^{\text{IBFD-SI}}$ and $\xi_l^{\text{MDD-SI}}<\xi_l^{\text{IBFD-SI}}$,} under the same constraints of system overhead\footnote{In IBFD-based systems, in order to mitigate the digital-domain SI, the receiver has to estimate the channel between DAC and ADC, and then reconstruct the transmitted signal, which is subtracted from the received signal \cite{sabharwal2014band}. This process introduces heavy system overhead, especially when the system is operated with a large number of subcarriers.}.
\end{remark}

Furthermore, we denote the time-domain channel impulse responses (CIRs) of the communication channels between the $d$-th MS and the $n$-th antenna at the $l$-th AP, the IAI channels between the $n$-th antenna at the $l$-th AP and the $n^{\prime}$-th antenna at the $l^{\prime}$-th AP, and the IMI channels between the $d$-th MS and $d^{\prime}$-th MS by $\pmb{g}_{ld}^n \in \mathbb{C}^{U \times 1}$, $\pmb{g}_{ll^{\prime}}^ {nn^{\prime}}\in \mathbb{C}^{U \times 1}$ and $\pmb{g}_{dd^{\prime}} \in \mathbb{C}^{U \times 1}$, respectively, where $U$ is the number of taps of multipath channels. Specifically, the $u$-th tap of these channels can be generally modeled as $\left(\pmb{g}\right)_u=\sqrt{\beta/U} \ \alpha_s$ with $\pmb{g} \in \left\{\pmb{g}_{ld}^n, \pmb{g}_{ll^{\prime}}^ {nn^{\prime}}, \pmb{g}_{dd^{\prime}} \right\}$, where $\beta \in\left\{\beta_{ld}, \beta_{ll^{\prime}}, \beta_{dd^{\prime}}\right\}$ accounts for the large-scale fading of path loss and shadowing. The channels of different taps are assumed to be independent. Additionally, the spatial correlation among the antennas of AP is not considered in this paper, which will be addressed in the future research. 

Given the time-domain CIRs, the frequency-domain channels can be obtained as $\pmb{h}=\pmb{F}\pmb{\Psi}\pmb{g}$ with $\pmb{h} \in \left\{\pmb{h}_{ld}^n, \pmb{h}_{ll^{\prime}}^ {nn^{\prime}}, \pmb{h}_{dd^{\prime}} \right\}$, where $\pmb{F} \in \mathbb{C}^{M_{\text{sum}}\times M_{\text{sum}}}$ is the FFT matrix, $\pmb{\varPsi}\in \mathbb{C}^{{M_{\text{sum}}\times U}}$ is constructed by the first $U$ columns of $\pmb{I}_{M_\text{sum}}$. Moreover, the single DL/UL subcarrier channel can be expressed as $h[m]=\pmb{\phi}_{\text{DL}}^T\pmb{h}$ and $h[\bar{m}]=\pmb{\phi}_{\text{UL}}^T\pmb{h}$, respectively, where $\pmb{\phi}_{\text{DL}}=\pmb{I}_{M_\text{sum}}^{(:,m)}$ and $\pmb{\phi}_{\text{UL}}=\pmb{I}_{M_\text{sum}}^{(:,\bar{m})}$ are the mapping vectors. Note that here $h[m]$ or $h[\bar{m}]$ denotes the point-to-point subcarrier channel, which will be further integrated into the vector and matrix for the AP-AP channel (i.e., $\pmb{H}_{ll^{\prime}}[m \  \text{or} \ \bar{m}] \in \mathbb{C}^{N \times N}$ ), AP-MS channel (i.e., $\pmb{h}_{ld}[m] \in \mathbb{C}^{N\times 1}$ and $\pmb{h}_{ld}[\bar{m}] \in \mathbb{C}^{N\times 1}$), respectively, as shown later in Section \ref{sec:MDD_CF:SM}.B.

\subsection{Downlink Transmission}
Within each CT interval, the data transmitted on the $m$-th DL subcarrier for the $d$-th MS is denoted by $x_d[m]$, which satisfies $\mathbb{E}\left\{\left|x_d[m]\right|^2\right\}=1$. The transmitted signal on the $m$-th DL subcarrier by the $l$-th AP is given by
\begin{equation}\label{eq:MDD_CF:slm}
\pmb{s}_l[m]=\sum_{d \in \mathcal{D}} \lambda_{ld}\mu_{ldm} \sqrt{p_{ldm}}\pmb{f}_{ld}[m] x_d[m],
\end{equation} 
where the binary variable $\lambda_{ld}$ denotes the association relationship between the $l$-th AP and the $d$-th MS, with $\lambda_{ld}=1$ expressing that the $d$-th MS is associated with the $l$-th AP and $\lambda_{ld}=0$, otherwise.  The binary variable $\mu_{ldm}$ explains the operation status of MS $d$ on the $m$-th DL subcarrier, with $\mu_{ldm}=1$ implying that the $m$-th DL subcarrier is activated by the $l$-th AP for DL transmission to MS $d$ and $\mu_{ldm}=0$, otherwise. In \eqref{eq:MDD_CF:slm}, $\pmb{f}_{ld}[m] \in \mathbb{C}^{N \times 1}$ is the  precoding vector with $\left\|\pmb{f}_{ld}[m]\right\|_2^2=1$, and $p_{ldm}$ is the power allocated to the $m$-th subcarrier of the $d$-th MS by the $l$-th AP. The total power budget at the $l$-th AP is expressed as $P_l$, satisfying $\sum_{m \in \mathcal{M}}\sum_{d \in \mathcal{D}} \lambda_{ld}\mu_{ldm} p_{ldm}\leq P_l$.

The signal received from the $m$-th DL subcarrier at the $d$-th MS can be expressed as
\begin{align}\label{eq:MDD_CF:ydm}
y_d[m] = \underbrace{\sum_{l \in \mathcal{L}} \pmb{h}_{ld}^H[m] \pmb{s}_l[m]}_{\text{Desired signal + MUI}} + z^{\text{SI}}_d + z^{\text{IMI}}_d  + n_d,
\end{align}
where $n_d \sim \mathcal{CN}(0,\sigma^2)$ is the additive white Gaussian noise. According to \cite{day2012full2,ng2016power}, the residual interference in digital domain arising from SI and IMI \big(i.e., $z^{\text{SI}}_d$ and $z^{\text{IMI}}_d$ in \eqref{eq:MDD_CF:ydm}\big) are modeled as Gaussian noise \cite{day2012full2}. Specifically, $z^{\text{SI}}_d \sim \mathcal{CN}(0,\mathbb{E}\left[\bar{z}^{\text{SI}}_d\left(\bar{z}^{\text{SI}}_d\right)^\ast\right])$ with $\bar{z}^{\text{SI}}_d=h_{dd}\sum_{\bar{m} \in \bar{\mathcal{M}}}\mu_{d\bar{m}}\sqrt{p_{d\bar{m}}}x_d[\bar{m}]$, where $x_d[\bar{m}]$ denotes the data transmitted on the $\bar{m}$-th UL subcarrier by the $d$-th MS, $p_{d\bar{m}}$ denotes the transmitted power. $z^{\text{IMI}}_d \sim \mathcal{CN}(0,\xi_d^{\text{IMI}}\mathbb{E}\left[\bar{z}^{\text{IMI}}_d\left(\bar{z}^{\text{IMI}}_d\right)^\ast\right])$ with $\bar{z}^{\text{IMI}}_d=\sum_{d^{\prime} \in \mathcal{D}\backslash \left\{d\right\}}\sum_{\bar{m} \in \bar{\mathcal{M}}}\mu_{d^{\prime}\bar{m}}\sqrt{p_{d^{\prime}\bar{m}}}h_{dd^{\prime}}[\bar{m}]x_{d^{\prime}}[\bar{m}]$, where $\xi_d^{\text{IMI}}$ denotes the residual IMI level at MS $d$.
%where $\xi_{\text{ti}}$ denotes the reduced SIC capability attributed to the error of time synchronization.
%\begin{remark}
%It can be expected that IMI may not severely affect the DL signal reception at MS side. This is because in CF networks, MSs with lower transmit power are sparsely distributed in comparison with APs, resulting in the relatively low power of IMI at MS receivers. Consequently, based on the fact that the IMI is from the UL subcarriers, it can be readily removed by the FFT operation at the MS. However, we should mention that like any other multicarrier systems, the FFT operation for the mitigation of residual SI, IAI and IMI requires accurate time synchronization \cite{yang2009multicarrier}. As we know, it is relative easy to achieve time synchronization between transmitter and receiver at one side or between different APs via low-latency fronthauls \cite{jeong2020frequency}, while the synchronization between MSs is challenging and the excessive time synchronization error beyond the allowable time window may cause the reduced performance of the FFT-relied interference cancellation. However, this issue is beyond the scope of this paper, and will be studied in our future work. 
%\end{remark}

Based on \eqref{eq:MDD_CF:ydm}, it can be shown that the received SINR on the $m$-th DL subcarrier at the $d$-th MS is given by
\begin{equation}\label{eq:MDD_CF:dlSINR}
\text{SINR}_{d,m} = \frac{\left|\sum_{l \in \mathcal{L}}\lambda_{ld}\mu_{ldm} \sqrt{p_{ldm}}\pmb{h}_{ld}^H[m]\pmb{f}_{ld}[m]\right|^2}{\text{MUI}_{d,m}+ \text{var}\left\{z^{\text{SI}}_d\right\} + \text{var}\left\{z^{\text{IMI}}_d\right\}+\sigma^2},
\end{equation} 
where 
\begin{equation}
\text{MUI}_{d,m}=\sum_{l \in \mathcal{L}}\sum_{d^{\prime} \in \mathcal{D}\backslash \left\{d\right\}}\lambda_{ld^{\prime}}\mu_{ld^{\prime}m}p_{ld^{\prime}m}\left|\pmb{h}_{ld}^H[m]\pmb{f}_{ld^{\prime}}[m]\right|^2.
\end{equation}

\subsection{Uplink Transmission}
The received UL signal by the $l$-th AP from the $\bar{m}$ subcarrier of MS $d$ can be expressed as
\begin{align}\label{eq:MDD_CF:ylbarm}
\pmb{y}_l[\bar{m}] &= \underbrace{\sum_{d \in \mathcal{D}}\mu_{d\bar{m}}\sqrt{p_{d\bar{m}}}\pmb{h}_{ld}[\bar{m}]x_d[\bar{m}]}_{\text{Desired signal + MUI}}+ \pmb{z}^{\text{SI}}_l + \pmb{z}^{\text{IAI}}_l + \pmb{n}_l,
\end{align}
where $p_{d\bar{m}}$ denotes the power allocated by MS $d$ to the $\bar{m}$-th UL subcarrier, which satisfies $\sum_{\bar{m} \in \bar{\mathcal{M}}}\mu_{d\bar{m}}p_{d\bar{m}}\leq P_d$. Similar to the received signals at MSs, the residual interference due to the SI and IAI are modeled as Gaussian noise, where $\pmb{z}^{\text{SI}}_l \sim \mathcal{CN}\left(0,\text{diag}\left(\mathbb{E}\left[\bar{\pmb{z}}^{\text{SI}}_l\left(\bar{\pmb{z}}^{\text{SI}}_l\right)^H\right]\right)\right)$ with $\bar{\pmb{z}}^{\text{SI}}_l=\pmb{H}_{ll}\sum_{m \in \mathcal{M}}\pmb{s}_l[m]$, and $\pmb{z}^{\text{IAI}}_l \sim \mathcal{CN}\left(0,\xi_l^{\text{IAI}}\text{diag}\left(\mathbb{E}\left[\bar{\pmb{z}}^{\text{IAI}}_l\left(\bar{\pmb{z}}^{\text{IAI}}_l\right)^H\right]\right)\right)$ with $\bar{\pmb{z}}^{\text{IAI}}_l=\sum_{l^{\prime} \in \mathcal{L}\backslash \left\{l\right\}}\sum_{m \in \mathcal{M}}\pmb{H}_{ll^{\prime}}[m]\pmb{s}_{l^{\prime}}[m]$, where $\xi_l^{\text{IAI}}$ denotes the residual IAI level at the $l$-th AP.
\begin{remark}
$\pmb{z}^{\text{IMI}}_d$ and $\pmb{z}^{\text{IAI}}_l$ in \eqref{eq:MDD_CF:ydm} and \eqref{eq:MDD_CF:ylbarm} are mainly attributed to the large-scale fading of the interfering links, power-allocation and the residual IAI/IMI after the supplementary mitigation of IAI/IMI in propagation/analog- and digital-domain. In distributed CF systems, since each AP works independently, the propagation/analog-domain IAI suppression can hardly rely on the coordinated transmit beamforming and successive interference cancellation \cite{nguyen2020spectral,kusaladharma2021achievable}. Furthermore,  the passive methods, such as antenna cross-polarization, beam separation and absorber can only provide a small part of IAI suppression \cite{kolodziej2019band}. Additionally, due to the employment of single-antenna, the MSs with less powerful baseband processor may fail to handle any IMI in the propagation/analog domain. However, as seen in \eqref{eq:MDD_CF:ydm} and \eqref{eq:MDD_CF:ylbarm}, in our porposed MDD-CF scheme, the IAI/IMI are mutually orthogonal to the desired UL/DL signals in digital domain. Therefore, aided by the large-scale fading and power-allocation to limit the received signal within the effective dynamic range of ADC, the followed FFT operation in digital domain can provide extra mitigation of IAI/IMI~\cite{sabharwal2014band}. 
\end{remark}

We should mention that like any other multicarrier systems, the FFT operation for the mitigation of residual SI and IAI/IMI requires accurate time synchronization \cite{yang2009multicarrier}. As we know, it is relatively easy to achieve time synchronization between transmitter and receiver at one side or between different APs via low-latency fronthauls \cite{jeong2020frequency}, while the synchronization between MSs is challenging and the excessive time synchronization error beyond the allowable time window may degrade the performance of the FFT-relied interference cancellation. However, this issue is beyond the scope of this paper, and will be studied in our future work.    
%In comparison with IMI, the suppression of IAI is of paramount importance in the networks with densely distributed APs, where DL and UL signals are simultaneously transmitted. However, this is not hard to solve in the MDD-CF networks. As shown in Fig. \ref{figure-MDDCF-SICdia}, for the APs having long interference links, the large-scale fading may enable IAI signals to be mitigated within the dynamic range of ADCs, which are then removed with the aid of the FFT operation in digital domain. When two APs are close to each other, resulting in stronger IAI, the IAI can be mitigated in the propagation/analog domain prior to the ADCs by the traditional channel-aware or channel-unaware methods, such as, cross-polarization, beam separation, absorber and transmit beamforming \cite{kolodziej2019band}. Additionally, power control implemented among cluster of APs may also be used to alleviate IAI. 

Due to the feature of distributed operation in our proposed system, each AP firstly processes the received signals from MSs using the local combining vectors, yielding $\tilde{y}_l[\bar{m}]=\pmb{w}_{ld}^H[\bar{m}]\pmb{y}_l[\bar{m}]$, where $\pmb{w}_{ld}[\bar{m}]$ denotes the local combining vector of AP $l$ for detecting MS $d$. Then, the local estimated data by all APs are further collected by the CPU for final processing, which can be expressed as $y_{\text{cpu}}[\bar{m}]=\sum_{l \in \mathcal{L}}\tilde{y}_l[\bar{m}]$. The SINR obtained by the CPU for detecting the data transmitted on the UL subcarrier $\bar{m}$ of MS $d$ can be expressed as
\begin{equation}\label{eq:MDD_CF:ulSINR}
\text{SINR}_{d,\bar{m}}=\frac{\mu_{d\bar{m}}p_{d\bar{m}}\left|\tilde{\pmb{w}}_d[\bar{m}]\tilde{\pmb{h}}_d[\bar{m}]\right|^2}{\text{MUI}_{d,\bar{m}}+ \text{SI}_{d,\bar{m}}+\text{IAI}_{d,\bar{m}}+\sigma^2\left\|\tilde{\pmb{w}}_d[\bar{m}]\right\|^2},
\end{equation}  
where 
\begin{align}
\tilde{\pmb{w}}_d[\bar{m}]&=\left[\pmb{w}_{1d}^H[\bar{m}],...,\pmb{w}_{Ld}^H[\bar{m}]\right] \in \mathbb{C}^{1 \times NL}, \nonumber \\
\tilde{\pmb{h}}_d[\bar{m}]&=[\pmb{h}_{1d}^H[\bar{m}],...,\pmb{h}_{Ld}^H[\bar{m}]]^H \in \mathbb{C}^{NL \times 1}, \nonumber \\
\text{MUI}_{d,\bar{m}}&=\sum_{d^{\prime} \in \mathcal{D}\backslash \left\{d\right\}}\mu_{d^{\prime}\bar{m}}p_{d^{\prime}\bar{m}}\left|\tilde{\pmb{w}}_d[\bar{m}]\tilde{\pmb{h}}_{d^{\prime}}[\bar{m}]\right|^2, \nonumber \\
\text{SI}_{d,\bar{m}}&=\sum_{l \in \mathcal{L}}\mathbb{E}\left[\left\|\pmb{w}_{ld}^H[\bar{m}]\pmb{z}^{\text{SI}}_l\right\|^2\right], \nonumber \\
\text{IAI}_{d,\bar{m}}&=\sum_{l \in \mathcal{L}}\mathbb{E}\left[\left\|\pmb{w}_{ld}^H[\bar{m}]\pmb{z}^{\text{IAI}}_l\right\|^2\right].
\end{align}

 %$, $\text{MUI}_{d,\bar{m}}=\sum_{d^{\prime} \in \mathcal{D}\backslash \left\{d\right\}}\mu_{d^{\prime}\bar{m}}p_{d^{\prime}\bar{m}}\left|\tilde{\pmb{w}}_d[\bar{m}]\tilde{\pmb{h}}_{d^{\prime}}[\bar{m}]\right|^2$, $\text{SI}_{d,\bar{m}}=\sum_{l \in \mathcal{L}}\mathbb{E}\left[\left\|\pmb{w}_{ld}^H[\bar{m}]\pmb{z}^{\text{SI}}_l\right\|^2\right]$,  $\text{IAI}_{d,\bar{m}}=\sum_{l \in \mathcal{L}}\mathbb{E}\left[\left\|\pmb{w}_{ld}^H[\bar{m}]\pmb{z}^{\text{IAI}}_l\right\|^2\right]$

\subsection{Beamforming Strategy}
In this paper, the ZF beamforming strategy is chosen for transmitting and receiving at APs. Generally speaking, MMSE beamforming outperforms ZF beamforming when perfect CSI is available, but when considering the multi-MS interfernece suppression, computation complexity as well as concise formulation, ZF is applied in the following analysis, and it can be easily substituted by MMSE in our proposed system. Based on the ZF principle \cite{jiang2011performance}, the precoder/combiner at the $l$-th AP, i.e., $\pmb{F}^{\text{ZF}}_l[m]=\left[\pmb{f}_{l1}^{\text{ZF}}[m],...,\pmb{f}_{lD}^{\text{ZF}}[m]\right]$ and $\pmb{W}^{\text{ZF}}_l[\bar{m}]=\left[\pmb{w}_{l1}^{\text{ZF}}[\bar{m}],...,\pmb{w}_{lD}^{\text{ZF}}[\bar{m}]\right]$, can be derived as $\pmb{F}^{\text{ZF}}_l[m]=\pmb{H}_{l}^H[m]\left(\pmb{H}_{l}[m]\pmb{H}_{l}^H[m]\right)^{-1}$ and $\pmb{W}^{\text{ZF}}_l[\bar{m}]=\pmb{H}_{l}[\bar{m}]\left(\pmb{H}_{l}^H[\bar{m}]\pmb{H}_{l}[\bar{m}]\right)^{-1}$, respectively, where $\pmb{H}_{l}[m]=\left[\pmb{h}_{l1}[m],...,\pmb{h}_{lD}[m]\right]^H$,  $\pmb{H}_{l}[\bar{m}]=\left[\pmb{h}_{l1}[\bar{m}],...,\pmb{h}_{lD}[\bar{m}]\right]$. Note that, in order to ensure that the MUI is fully suppressed, the implementation of ZF beamforming should adhere to the constraint of $N\geq D$.\footnote{For the sake of convenience, we assume that each AP is employed with sufficient antennas so as to suppress the interference that itself generates. Although an AP is expected to be equipped with a small number of antennas in CF systems, our assumption is still practical, as each AP can be treated as a secondary central unit controlling $N$ single-antenna APs operated in a centralized mode through fronthaul connections.} In this case, the MUI terms in \eqref{eq:MDD_CF:dlSINR} and \eqref{eq:MDD_CF:ulSINR} are equal to zero. Therefore, the $\text{SINR}_{d,m}$ and $\text{SINR}_{d,\bar{m}}$ can be rewritten as follows
\begin{align}\label{eq:MDDCF:SINRSim}
\text{SINR}_{d,m}&=\frac{\left|\sum_{l \in \mathcal{L}}\lambda_{ld}\mu_{ldm} \sqrt{p_{ldm}}\omega_{ldm}\right|^2}{\xi_d^{\text{SI}}\Theta_{\text{DL}}+\sigma^2}, \nonumber\\
\text{SINR}_{d,\bar{m}}&=\frac{\mu_{d\bar{m}}p_{d\bar{m}}L^2}{\sum_{l \in \mathcal{L}}\upsilon_{ld\bar{m}}\left(\xi_l^{\text{SI}}\Theta_{\text{UL}}+\sigma^2\right)}, 
\end{align} 
where
\begin{align}
&\omega_{ldm}=\frac{1}{\left\|\pmb{f}_{ld}^{\text{ZF}}[m]\right\|_2}, \upsilon_{ld\bar{m}}=\left\|\pmb{w}_{ld}^{\text{ZF}}[\bar{m}]\right\|_2^2,\nonumber \\
&\Theta_{\text{DL}}=\sum_{\bar{m} \in \bar{\mathcal{M}}}\mu_{d\bar{m}}p_{d\bar{m}} +\frac{\xi_d^{\text{IMI}}}{\xi_d^{\text{SI}}} \sum_{d^{\prime} \in \mathcal{D}\backslash \left\{d\right\}}\sum_{\bar{m} \in \bar{\mathcal{M}}}\frac{\beta_{dd^{\prime}}}{M_{\text{sum}}}\mu_{d^{\prime}\bar{m}}p_{d^{\prime}\bar{m}}, \nonumber \\
&\Theta_{\text{UL}}=\sum_{m \in \mathcal{M}}\sum_{d \in \mathcal{D}} \lambda_{ld}\mu_{ldm} p_{ldm} \nonumber \\
&+\frac{\xi_l^{\text{IAI}}}{\xi_l^{\text{SI}}}\sum_{l^{\prime} \in \mathcal{L}\backslash \left\{l\right\}}\sum_{m \in \mathcal{M}}\sum_{d \in \mathcal{D}} \frac{\beta_{ll^{\prime}}}{M_{\text{sum}}}\lambda_{l^{\prime}d}\mu_{l^{\prime}dm} p_{l^{\prime}dm}.
\end{align}
For the details of simplification, please refer to Appendix \ref{Appen:MDD_CF_SINRSim}.

Consequently, the average SE of the MDD-CF scheme in nats/s/Hz can be expressed as
\begin{align}
\label{eq:MDDCF:MDDSE}
\Lambda_{\text{SE}}&=\left(1-\frac{\gamma^{\text{P}}}{T_{\text{c}}}\right) \frac{1}{M_{\text{sum}}}\sum_{d \in \mathcal{D}} \Big(\sum_{m \in \mathcal{M}}R(\text{SINR}_{d,m}), \nonumber \\
&+\sum_{\bar{m} \in \mathcal{\bar{M}}}R(\text{SINR}_{d,\bar{m}})\Big)
\end{align}
where $R(x)\triangleq ln(1+x)$.

\section{SE Optimization within single CT Interval}\label{sec:MDD_CF:seo}
In this section, we aim to maximize the SE over one CT interval in the MDD-CF scheme, as shown in Fig. \ref{figure-MDDCF-Coh}. Given the ZF beamforming, the optimization problem can be stated as:
\begin{subequations}
\label{eq:MDDCF:SE_formulation}
\begin{align}
&\max_{\lambda_{ld}, \mu_{ldm}, \mu_{d\bar{m}}, p_{ldm}, p_{d\bar{m}}} \Lambda_{\text{SE}}  \\ 
\text{s.t.} \ \ &~~\lambda_{ld} \in \left\{0,1\right\}, \ \forall l \in \mathcal{L}, d \in \mathcal{D}, \\
&~~\mu_{ldm} \in \left\{0,1\right\},\ \forall l \in \mathcal{L}, d \in \mathcal{D}, m \in \mathcal{M}, \\
&~~\mu_{d\bar{m}} \in \left\{0,1\right\},\ \forall d \in \mathcal{D}, \bar{m} \in \mathcal{\bar{M}},\\
&~~\sum_{m \in \mathcal{M}}\sum_{d \in \mathcal{D}} \lambda_{ld}\mu_{ldm} p_{ldm}\leq P_l, \ \forall l \in \mathcal{L}, \\
&~~\sum_{\bar{m} \in \bar{\mathcal{M}}}\mu_{d\bar{m}}p_{d\bar{m}}\leq P_d, \ \forall d \in \mathcal{D}, \\
&~~\sum_{m \in \mathcal{M}}R(\text{SINR}_{d,m}) \geq \chi_{\text{DL}}, \ \forall d  \in \mathcal{D}, \\
&~~\sum_{\bar{m} \in \mathcal{\bar{M}}}R(\text{SINR}_{d,\bar{m}}) \geq \chi_{\text{UL}}, \ \forall d  \in \mathcal{D},
\end{align}
\end{subequations}
where the constraints of (\ref{eq:MDDCF:SE_formulation}g) and (\ref{eq:MDDCF:SE_formulation}h) are applied to guarantee the MS's QoS requirements for DL and UL so as to avoid unbalanced greedy resource-allocation among MSs. It can be observed that (\ref{eq:MDDCF:SE_formulation}) is an optimization problem of AP-selection and resource-allocation, which is hard to solve because the binary variables \big(i.e., $\pmb{\lambda}=\left\{\lambda_{ld}\right\}, \pmb{\mu}=\left(\left\{\mu_{ldm}\right\}, \left\{\mu_{d\bar{m}}\right\}\right)$\big) are tightly coupled with the continuous variables $\pmb{p}=(\left\{p_{ldm}\right\}, \left\{p_{d\bar{m}}\right\})$. To circumvent this problem, in what follows, we first focus on the reduction and approximation of the involved binary variables.  

\subsection{Reduction of Binary Variables}
\subsubsection{Reduction of $\mu_{ldm}$ and $\mu_{d\bar{m}}$}
Let us first consider the case of $\mu_{ldm}$. According to \eqref{eq:MDD_CF:dlSINR}, the relationship between $\mu_{ldm}$ and $p_{ldm}$ at the $l$-th AP can be provided by the following lemma.
\begin{lemma}\label{lemma:MDD_CF:mup}
For the potentially optimal solution of \eqref{eq:MDDCF:SE_formulation}, the only feasible combinations of $\mu_{ldm}$ and $p_{ldm}$ are $(\mu_{ldm}^{\ast},p_{ldm}^{\ast})\in\left\{(0,0),(1,\tilde{p}_{ldm})\right\}$, where $\tilde{p}_{ldm}\neq 0$. 
\end{lemma}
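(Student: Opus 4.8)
The plan is to exploit a simple but decisive structural feature of the formulation: the activation indicator $\mu_{ldm}$ never appears in isolation, but always multiplies the power variable $p_{ldm}$. Inspecting the simplified DL expression in \eqref{eq:MDDCF:SINRSim}, the useful signal enters the numerator of $\text{SINR}_{d,m}$ only through the product $\mu_{ldm}\sqrt{p_{ldm}}$, while in the per-AP power budget (\ref{eq:MDDCF:SE_formulation}e) and in the UL interference term $\Theta_{\text{UL}}$ the DL power enters only through $\lambda_{ld}\mu_{ldm}p_{ldm}$. Consequently, the sole quantity through which the pair $(\mu_{ldm},p_{ldm})$ can influence either the objective $\Lambda_{\text{SE}}$ or any constraint is the \emph{effective power} $\mu_{ldm}p_{ldm}$ (equivalently its root $\mu_{ldm}\sqrt{p_{ldm}}$). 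I would take this observation as the backbone of the argument.

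With this in hand, I would examine the two ``mixed'' candidate pairs and show that each is redundant. For $(\mu_{ldm},p_{ldm})=(0,p_{ldm})$ with $p_{ldm}\neq0$, the factor $\mu_{ldm}=0$ annihilates $p_{ldm}$ in every term, so the effective power and the signal contribution are both zero, exactly as for $(0,0)$; replacing this pair by $(0,0)$ therefore leaves $\Lambda_{\text{SE}}$ and every constraint numerically unchanged. For $(\mu_{ldm},p_{ldm})=(1,0)$, the vanishing power forces $\mu_{ldm}\sqrt{p_{ldm}}=0$ and $\mu_{ldm}p_{ldm}=0$, so this pair is again indistinguishable from $(0,0)$ in both the objective and all constraints, including the coupled QoS floors (\ref{eq:MDDCF:SE_formulation}g)--(\ref{eq:MDDCF:SE_formulation}h). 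Hence any optimal solution may be transformed, one triple $(l,d,m)$ at a time, into a canonical optimal solution in which $\mu_{ldm}=0$ is always paired with $p_{ldm}=0$, so that the surviving admissible pairs are exactly $(0,0)$ and $(1,\tilde{p}_{ldm})$ with $\tilde{p}_{ldm}\neq0$.

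The point requiring the most care is certifying that these local replacements preserve global feasibility of the coupled problem, in particular the per-AP budget (\ref{eq:MDDCF:SE_formulation}e), the per-MS budget (\ref{eq:MDDCF:SE_formulation}f), and the QoS constraints (\ref{eq:MDDCF:SE_formulation}g)--(\ref{eq:MDDCF:SE_formulation}h) that tie DL and UL powers together through $\Theta_{\text{DL}}$ and $\Theta_{\text{UL}}$. This is where I expect the only genuine subtlety to lie; it dissolves, however, once one notes that in both degenerate pairs the effective power $\mu_{ldm}p_{ldm}$ and the signal term $\mu_{ldm}\sqrt{p_{ldm}}$ are already zero, so moving to $(0,0)$ alters neither the consumed budget nor any value of $\text{SINR}_{d,m}$ or $\text{SINR}_{d,\bar{m}}$, and every constraint and the objective are left exactly intact. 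This proves the claimed dichotomy and, as a by-product, identifies $\mu_{ldm}$ with the indicator of the event $p_{ldm}>0$, which is precisely the reduction of binary variables sought in this subsection.
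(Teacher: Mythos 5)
Your proposal is correct and follows essentially the same route as the paper: both arguments rest on the observation that $\mu_{ldm}$ and $p_{ldm}$ enter the objective and constraints only through the products $\mu_{ldm}p_{ldm}$ and $\mu_{ldm}\sqrt{p_{ldm}}$, so the mixed pairs $(0,\tilde{p}_{ldm})$ and $(1,0)$ are indistinguishable from $(0,0)$ and can be replaced by it without changing $\Lambda_{\text{SE}}$ or feasibility. Your explicit check that the per-AP/per-MS budgets and the QoS constraints are also preserved is a slightly more careful rendering of what the paper leaves implicit, but it is not a different method.
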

\begin{proof}
For the potentially optimal solution of \eqref{eq:MDDCF:SE_formulation}, the possible combinations of $\mu_{ldm}$ and $p_{ldm}$ are $(\mu_{ldm}^{\ast},p_{ldm}^{\ast})\in\left\{(0,0),(0,\tilde{p}_{ldm}),(1,0),(1,\tilde{p}_{ldm})\right\}$, where $\tilde{p}_{ldm}\neq 0$. Then, it can be easily found that $\Lambda_{\text{SE}}\big(\left(\lambda,\pmb{\mu},\pmb{p}\right)|\mu_{ldm}=0 \ \& \ p_{ldm}=0 \big)=\Lambda_{\text{SE}}\big(\left(\lambda,\pmb{\mu},\pmb{p}\right)| \ \mu_{ldm}p_{ldm}=0 \ \& \ \mu_{ldm}+p_{ldm}\neq 0 \big)$, since $\mu_{ldm}$ and $p_{ldm}$ are tightly coupled in \eqref{eq:MDDCF:SINRSim}. When $\mu_{ldm}$ or $p_{ldm}$ is equal to 0, $\text{SINR}_{d,m}$ and $\text{SINR}_{d,\bar{m}}$ remain unchanged as $\mu_{ldm}p_{ldm}=0$.  In conclusion, due to the special relationship between $\mu_{ldm}$ and $p_{ldm}$, these two variables in the optimal solution of \eqref{eq:MDDCF:SE_formulation} can only be either $(0,0)$ or $(1,\tilde{p}_{ldm})$. This completes the proof.
%Please see Appendix \ref{Appen:MDD_CF_lemma1}.
\end{proof}
Based on Lemma \ref{lemma:MDD_CF:mup}, after the optimization problem \eqref{eq:MDDCF:SE_formulation} is solved, all $\mu_{ldm}$ can be subsequently obtained from the optimal power-allocation, i.e., $p_{ldm}^{\ast}$, which can be given as
\begin{equation}\label{eq:MDD_CF:ldmpower}
\mu_{ldm}=
\begin{cases}
0, &\frac{p_{ldm}^{\ast}}{P_l}<\kappa \\
1, &\frac{p_{ldm}^{\ast}}{P_l}\geq \kappa
\end{cases},
\end{equation}
where $\kappa$ is a very small number, implying that a small value of $p_{ldm}^{\ast}$ can be deemed as zero. Analogously, we can apply Lemma \ref{lemma:MDD_CF:mup} and \eqref{eq:MDD_CF:ldmpower} to derive $p_{d\bar{m}}$ and $\mu_{d\bar{m}}$.

\subsubsection{Reduction of $\lambda_{ld}$}
The binary variable $\lambda_{ld}$ denotes the association status between AP $l$ and MS $d$. Intuitively, once any of the DL subcarriers is activated at AP $l$ for transmitting data to MS $d$, i.e., $\exists m\in\mathcal{M}, \mu_{ldm}=1$, the communication link between AP $l$ and MS $d$ is established. Therefore, $\lambda_{ld}$ can be obtained as
\begin{equation}\label{eq:MDDCF:lamdald}
\lambda_{ld}=\text{max}\left\{\mu_{ldm}|m\in \mathcal{M}\right\}, \ \forall l \in \mathcal{L}, \ d \in \mathcal{D}.  
\end{equation}

%After obtaining $\pmb{p}$, $\pmb{\mu}$ and $\lambda$ can be solved by \eqref{eq:MDD_CF:ldmpower}  and \eqref{eq:MDDCF:lamdald}. Note that, the optimization of problem \eqref{eq:MDDCF:SE_simpli} is an iterative process, since the update of $\pmb{\mu}$ and $\lambda$ may change the size of precoding matrix, and hence the results of $\pmb{p}$ needs to be further updated.

\subsection{Maximization of SE Based on Quadratic Transform}
After the reduction of the binary variables, as shown in Section \ref{sec:MDD_CF:seo}.A, the SE optimization problem can be transformed to a relatively simple form as
\begin{subequations}
\label{eq:MDDCF:SE_simpli}
\begin{align}
&\max_{\pmb{p}} \ \Lambda_{\text{SE}}  \\ 
\text{s.t.} \ &~~(\ref{eq:MDDCF:SE_formulation}\text{b}), (\ref{eq:MDDCF:SE_formulation}\text{c}), (\ref{eq:MDDCF:SE_formulation}\text{d}), (\ref{eq:MDDCF:SE_formulation}\text{e}), (\ref{eq:MDDCF:SE_formulation}\text{f}),\\
&~~\text{SINR}_{d,m}\geq e^{\frac{\chi_{\text{DL}}}{M_d}}-1, \ \forall d \in \mathcal{D}, m \in \mathcal{M},\\
&~~\text{SINR}_{d,\bar{m}}\geq e^{\frac{\chi_{\text{UL}}}{\bar{M}_d}}-1, \ \forall d \in \mathcal{D}, \bar{m} \in \mathcal{\bar{M}},
%&~~\tau_{\text{DL}}+1\geq\frac{e^{\chi_{\text{DL}}}}{M} \\
%&~~\tau_{\text{UL}}+1\geq\frac{e^{\chi_{\text{UL}}}}{\bar{M}} 
\end{align}
\end{subequations}
where $M_d$ and $\bar{M}_d$ denote the numbers of DL and UL subcarriers assigned to MS $d$, respectively. During the optimization, the values of $\pmb{\lambda}$ and $\pmb{\mu}$ are initialized to $\pmb{1}$ and then iteratively updated with the results of power-allocation. Furthermore, the ZF precoder/combiner are also re-computed during each iteration according to the updated results of AP-MS connection and subcarrier-allocation. Note that, the original constraints (\ref{eq:MDDCF:SE_formulation}g) and (\ref{eq:MDDCF:SE_formulation}h) contain a sum of $M$ and $\bar{M}$ nonconvex components for each MS $d$, respectively, resulting in extremely high complexity. Hence, to make the optimization tractable, these two constraints are simplified to (\ref{eq:MDDCF:SE_simpli}c) and (\ref{eq:MDDCF:SE_simpli}d). 
%\begin{remark}
%The original constraints, i.e., (\ref{eq:MDDCF:SE_formulation}g) and (\ref{eq:MDDCF:SE_formulation}h), consisting of $M+\bar{M}$ sub-nonconvex constraints in total, are of extraordinary complexity. Hence, for the ease of optimization, we first simplify them into (\ref{eq:MDDCF:SE_simpli}c) and (\ref{eq:MDDCF:SE_simpli}d). Then, considering the diversity of subcarriers used for each MS, we will apply water-filling algorithm to re-allocate the power assigned to each subcarrier within one MS. {\color{red}Also illustrate the case of EE}
%\end{remark} 

It can be shown that the scaled-down objective function (\ref{eq:MDDCF:SE_simpli}a) belongs to the general multiple-ratio concave-convex fractional programming (CCFP) problem \cite{shen2018fractional}. However, the constraints of (\ref{eq:MDDCF:SE_simpli}c) and (\ref{eq:MDDCF:SE_simpli}d) are still nonconvex, which have to be approximated by the convex ones.

Specifically, based on \eqref{eq:MDDCF:SINRSim}, the constraint (\ref{eq:MDDCF:SE_simpli}c) can be equivalently written as
\begin{subnumcases}{\label{eq:MDDCF:SINRdmEqu} (\ref{eq:MDDCF:SE_simpli}\text{c}) \Longleftrightarrow}
&$\text{SINR}_{d,m} \triangleq \varpi^2_{d,m} / \psi_{d,m} \geq e^{\frac{\chi_{\text{DL}}}{M_d}}-1, $ \label{eq:MDDCF:SINRdmEqu:a} \\
&$0<\varpi_{d,m} \leq \sum_{l \in \mathcal{L}}\sqrt{p_{ldm}}\omega_{ldm}, $ \label{eq:MDDCF:SINRdmEqu:b} \\
&$\psi_{d,m} \geq \xi_d^{\text{SI}}\Theta_{\text{DL}}+\sigma^2, $ \label{eq:MDDCF:SINRdmEqu:c}
\end{subnumcases}
where $\varpi_{d,m}$ and $\psi_{d,m}$ are new variables, while \eqref{eq:MDDCF:SINRdmEqu:b} and \eqref{eq:MDDCF:SINRdmEqu:c} are linear constraints. For \eqref{eq:MDDCF:SINRdmEqu:a}, since the function $f_{\text{ca}}(\varpi_{d,m},\psi_{d,m})\triangleq \varpi^2_{d,m} / \psi_{d,m}$ with $(\varpi_{d,m},\psi_{d,m})\in \mathbb{R}_{++}^2$ is convex, it can be approximated using the successive convex approximation (SCA) properties as \cite{marks1978general}
\begin{align}
&f_{\text{ca}}(\varpi_{d,m},\psi_{d,m})\nonumber \\
&\geq \frac{2\varpi_{d,m}^{(t)}}{\psi_{d,m}^{(t)}}\varpi_{d,m}-\frac{(\varpi_{d,m}^{(t)})^2}{(\psi_{d,m}^{(t)})^2}\psi_{d,m} :=f_{\text{ca}}^{(t)}(\varpi_{d,m},\psi_{d,m}),
\end{align}
where $(\varpi_{d,m}^{(t)},\psi_{d,m}^{(t)})$ is the feasible point obtained at the $t$-th iteration. Therefore, \eqref{eq:MDDCF:SINRdmEqu:a} can be substituted by the new constraint given by
\begin{equation}\label{eq:MDDCF:SINRCA}
f_{\text{ca}}^{(t)}(\varpi_{d,m},\psi_{d,m})\geq e^{\frac{\chi_{\text{DL}}}{M_d}}-1, \ \forall d \in \mathcal{D}, m \in \mathcal{M},
\end{equation}

Following the same spirit, (\ref{eq:MDDCF:SE_simpli}d) can be replaced by the following convex constraints:
\begin{subnumcases}{\label{eq:MDDCF:SINRdbarmEqu} (\ref{eq:MDDCF:SE_simpli}\text{d}) \Longleftrightarrow}
&$f_{\text{ca}}^{(t)}(\sqrt{\varpi_{d,\bar{m}}},\psi_{d,\bar{m}})\geq  e^{\frac{\chi_{\text{UL}}}{\bar{M}_d}}-1,$ \label{eq:MDDCF:SINRdbarmEqu:a} \\
&$0<\varpi_{d,\bar{m}} \leq p_{d\bar{m}}L^2,$ \label{eq:MDDCF:SINRdbarmEqu:b} \\
&$\psi_{d,m} \geq \sum_{l \in \mathcal{L}}\upsilon_{ld\bar{m}}\left(\xi_l^{\text{SI}}\Theta_{\text{UL}}+\sigma^2\right).$ \label{eq:MDDCF:SINRdbarmEqu:c}
\end{subnumcases}

To this point, all the constraints in \eqref{eq:MDDCF:SE_simpli} are convex, and we can now apply the QT to deal with \eqref{eq:MDDCF:SE_simpli}. According to \cite[Corollary 2]{shen2018fractional}, the sum-of-functions-of-ratio problem in (\ref{eq:MDDCF:SE_simpli}a), i.e.,
\begin{subequations}
\begin{align}
&\max_{\pmb{p}} \nonumber \\
&\frac{1}{M_{\text{sum}}}\sum_{d=1}^D \big(\sum_{m=1}^M R_{d,m}(\frac{A_{d,m}(\pmb{p})}{B_{d,m}(\pmb{p})})+\sum_{\bar{m}=1}^{\bar{M}} R_{d,\bar{m}}(\frac{A_{d,\bar{m}}(\pmb{p})}{B_{d,\bar{m}}(\pmb{p})})\big) \\
&\text{s.t.} \ (\ref{eq:MDDCF:SE_simpli}\text{b}), \eqref{eq:MDDCF:SINRdmEqu:b}, \eqref{eq:MDDCF:SINRdmEqu:c}, \eqref{eq:MDDCF:SINRCA}, \eqref{eq:MDDCF:SINRdbarmEqu},
\end{align}
\end{subequations}
can be equivalently described as
\begin{subequations}
\label{eq:MDDCF:QT}
\begin{align}
&\max_{\pmb{p}} \frac{1}{M_{\text{sum}}}\sum_{d=1}^D \bigg(\sum_{m=1}^M R_{d,m}\big(2z_{dm}\sqrt{A_{d,m}(\pmb{p})}-z_{dm}^2B_{d,m}(\pmb{p})\big) \nonumber \\
&+\sum_{\bar{m}=1}^{\bar{M}} R_{d,\bar{m}}(2z_{d\bar{m}}\sqrt{A_{d,\bar{m}}(\pmb{p})}-z_{d\bar{m}}^2 B_{d,\bar{m}}(\pmb{p})\bigg) \\
&\text{s.t.} \ z_{dm} \in \mathbb{R}, \ \forall d \in \mathcal{D}, m \in \mathcal{M} \ , \ z_{d\bar{m}} \in \mathbb{R}, \ \forall d \in \mathcal{D}, \bar{m} \in \mathcal{\bar{M}},\\
&(\ref{eq:MDDCF:SE_simpli}\text{b}), \eqref{eq:MDDCF:SINRdmEqu:b}, \eqref{eq:MDDCF:SINRdmEqu:c}, \eqref{eq:MDDCF:SINRCA}, \eqref{eq:MDDCF:SINRdbarmEqu},
\end{align}
\end{subequations}
where $A(\pmb{p})$ and $B(\pmb{p})$ denote the numerator and denominator of \eqref{eq:MDDCF:SINRSim}, respectively. Since $R(x)$ for all $d$, $m$ and $\bar{m}$ is non-decreasing and concave, and for a given $\pmb{p}$, $\frac{A(\pmb{p})}{B(\pmb{p})}$ is in the concave-convex form, the optimal $\pmb{z}=\left(\left\{z_{dm}\right\},\left\{z_{d\bar{m}}\right\}\right)$ can be obtained as $\pmb{z}^{\ast}=\frac{\sqrt{A(\pmb{p})}}{B(\pmb{p})}$. Then, for a given $\pmb{z}$, the problem \eqref{eq:MDDCF:QT} is a concave maximization problem over $\pmb{p}$. Therefore, the overall problem can essentially be solved by a block coordinate ascent algorithm, with $\pmb{z}$ and $\pmb{p}$ iteratively optimized, until the optimization converges to a local optimum. Note that our QT-assisted algorithm mainly relies on the QT and SCA methods, and the details for the proof of their convergence can be found in \cite{shen2018fractional} and \cite{marks1978general}. Furthermore, the convergence properties and the complexity of the algorithm will be studied in Section \ref{sec:MDD_CF:sim}. In summary, the overall optimization algorithm for SE maximization is stated as Algorithm \ref{MDDCF:al1}.

\begin{algorithm}
\caption{QT-assisted Algorithm for SE maximization in MDD-CF scheme} 
\label{MDDCF:al1}
\textbf{Initialization:} \\
Set $\pmb{\lambda}=\pmb{1}, \pmb{\mu}=\pmb{1}$\;
Compute $\left\{\omega_{ldm}\right\}, \left\{\upsilon_{ld\bar{m}}\right\}, \forall l, d, m, \bar{m}$\;
Set $t=0$, $\pmb{p}^{(0)}$ to a feasible value, $\varpi_{d,m}^{(0)}=1, \varpi_{d,\bar{m}}^{(0)}=1, \psi_{d,m}^{(0)}=1, \psi_{d,\bar{m}}^{(0)}=1, \forall d, m, \bar{m}$\;
\QT{}{
\Repeat{$\textup{Convergence}$}{
Compute $\pmb{z}^{(t)}$ using $\pmb{z}^{(t)}=\frac{\sqrt{A(\pmb{p}^{(t)})}}{B(\pmb{p}^{(t)})}$, for a fixed $\pmb{p}^{(t)}$ \;
Update $\pmb{p}^{(t+1)}$ by solving \eqref{eq:MDDCF:QT}, for a fixed $\pmb{z}^{(t)}$  \;
Update $\varpi_{d,m}^{(t+1)}, \varpi_{d,\bar{m}}^{(t+1)}, \psi_{d,m}^{(t+1)}, \psi_{d,\bar{m}}^{(t+1)}$\;
Set $t = t + 1$;
}
}
%Obtain the total DL power at AP $l$ for MS $d$, i.e., $p_{ld}^{\text{DL}}=\sum_{m=1}^{M}p_{ldm}$, and total UL power at MS $d$, i.e., $p_d^{\text{UL}}=\sum_{\bar{m}=1}^{\bar{M}}p_{d\bar{m}}$. Then, apply water-filling algorithm to re-allocate the power $p_d^{\text{DL}}$ and $p_d^{\text{UL}}$ to each DL/UL subcarrier\;
Update $\pmb{\mu}$ and $\pmb{\lambda}$ using \eqref{eq:MDD_CF:ldmpower} and \eqref{eq:MDDCF:lamdald}, respectively\;
Repeat Step 3 to Step 13 until $\pmb{\lambda}$ and $\pmb{\mu}$ are stable, and obtain the optimal SE, i.e., $\Lambda_{\text{SE}}^{\ast}$.

\KwOut{$\pmb{\lambda}, \pmb{\mu}, \pmb{p}, \Lambda_{\text{SE}}^{\ast}$}
\end{algorithm}

\section{The SE Optimization within Radio Frame}\label{sec:MDD_CF:serf}
The optimization problem addressed in the previous section only considered the single CT interval and assumed perfect CSI. In this section, we extend our studies by considering a more complicated scenario, where the radio frame with imperfect channel estimation is assumed. More specifically, as shown in Fig. \ref{figure-MDDCF-MulCoh}, in our proposed frame structure, starting from the second CT interval of $T_\text{c}^2$, IBFD- and MDD-CF schemes can exploit an extra $\gamma^{\text{P}}$ of time for DL transmission at the beginning of the interval during UL training, owing to their FD feature. Therefore, for the convenience of illustration, we assume that after $T_\text{c}^1$, each TPCT interval consists of two transmission phases, namely Phase I with $\gamma^{\text{P}}$ of supplementary DL transmission and UL training, and Phase II with $(T_{\text{c}}-\gamma^{\text{P}})$ of DL/UL simultaneous transmissions. By contrast, in the counterpart TDD-CF scheme, since pilots and data are sequentially transmitted, the SE performance may deteriorate quickly in some cases. For example, when the relative velocity between MS and AP increases, UL training has to be implemented more frequently as the result of the shorter CT intervals, which unavoidably leads to the SE degradation.

However, in order to unleash the full advantages of FD, two paramount challenges need to be addressed in this case. Firstly, since the channel varies continuously from one CT interval to the next, the performance of the supplementary DL transmission in Phase I hinges on the predicted channel. For this problem, in literature, there are various channel prediction methods, e.g., Wiener filter\cite{truong2013effects}, Kalman filter\cite{kashyap2017performance} and deep learning \cite{yuan2020machine}, which can be applied to predict the required channels with high accuracy. Secondly, as depicted in Fig. \ref{figure-MDDCF-MulCoh}, although the extra DL transmission during Phase I can increase SE due to the added transmission time, it may cause interference on the receiving of UL pilots, and hence affect the channel estimation accuracy, which in turn leads to degraded performance in Phase II. Therefore, there exists a trade-off between the accuracy of channel acquisition and the SE provided by the supplementary DL transmission. To this end, we shall focus on this trade-off problem in the sequel under the assumption that the CSI used for the supplementary DL transmission is predicted using the Wiener filter.

\begin{figure}
\centering
\includegraphics[width=0.99\linewidth]{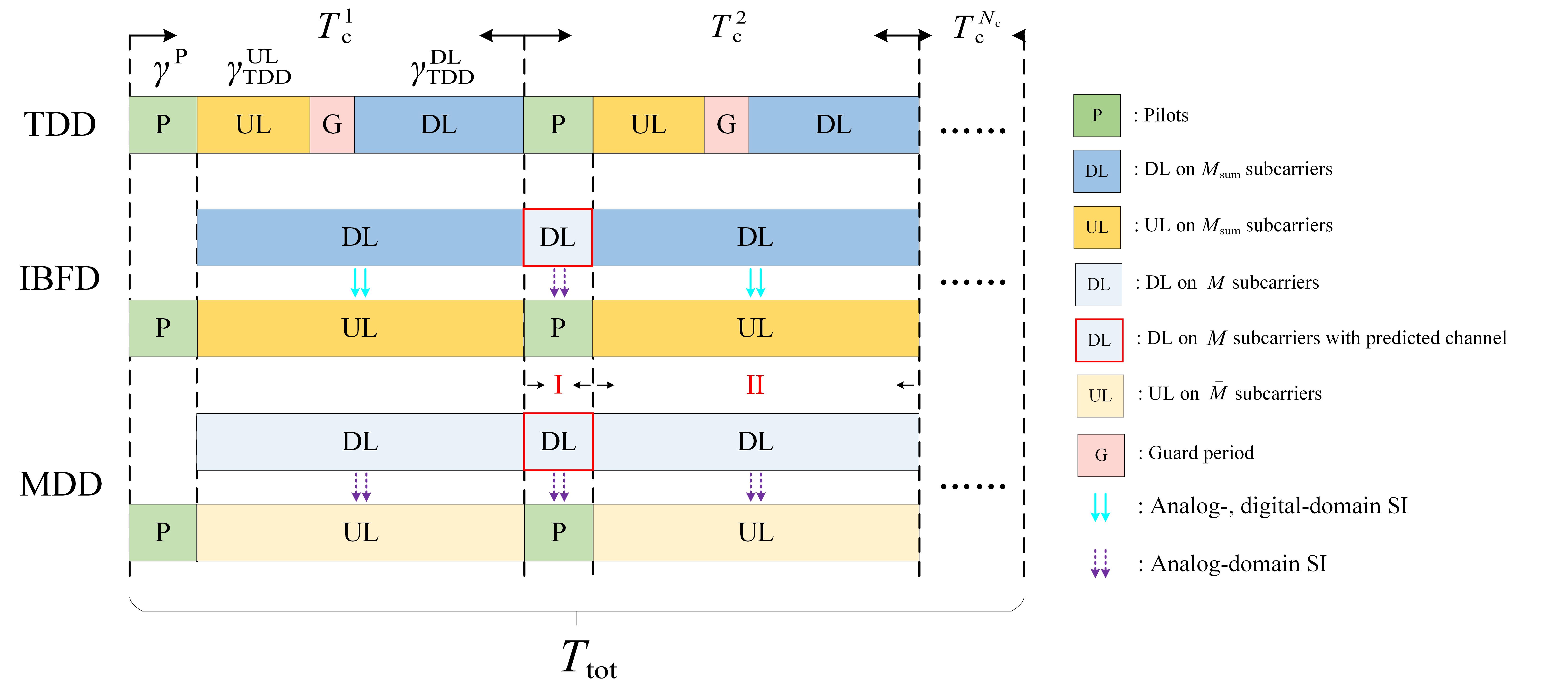}
\caption{Structure of radio frame of TDD-, IBFD- and MDD- schemes, where $N_c$ denotes the number of CT intervals within one radio frame.}
\label{figure-MDDCF-MulCoh}
\end{figure}

\subsection{Channel Estimation in MDD-CF Scheme}
Let us assume that all MSs synchronously transmit their frequency-domain pilot sequences (FDPS) over the $\bar{M}$ UL subcarriers, while all APs transmit DL data to MSs over the DL subcarriers. Let the FDPS transmitted by the $d$-th MS be expressed as $\pmb{x}_{d}^{\text{p}}=\left[x_{d}^{\text{p}}[1],...,x_{d}^{\text{p}}[\bar{m}],...,x_{d}^{\text{p}}[\bar{M}]\right]^T$. Then, the received training signal at the $n$-th antenna of AP $l$ can be written as
\begin{equation}\label{eq:MDD_CF:ypnCE}
\pmb{y}_{l}^{n} = \sum\limits_{d=1}^D\sqrt{p_{d}^{\text{p}}}\pmb{X}_{d}^{\text{p}}{\pmb{\Phi}_{\text{UL}}}\pmb{F}\pmb{\varPsi}\pmb{g}_{ld}^n+\pmb{z}^{n,\text{SI}}_l+\pmb{z}^{n,\text{IAI}}_l+\pmb{n}^n_l,
\end{equation}
where $p_{d}^{\text{p}}$ denotes the power assigned for sending pilots on each UL subcarrier by MS $d$, $\pmb{\Phi}_{\text{UL}}=\pmb{I}_{M_{\text{sum}}}^{(\bar{\mathcal{M}},:)}$, $\pmb{X}_{d}^{\text{p}}=\text{diag}\left\{\pmb{x}_{d}^{\text{p}}\right\}$, $\pmb{n}^n_l\sim \mathcal{CN}(\pmb{0}, \sigma^2\pmb{I}_{\bar{M}})$. Note that the transmit power of pilots is assumed to be larger than that of UL data, which is fixed during communication. According to Appendix \ref{Appen:MDD_CF_SINRSim}, we can obtain $\pmb{z}^{n,\text{SI}}_l\sim \mathcal{CN}\left(\pmb{0},\left(\text{cov}\left\{\pmb{z}_l^{\text{SI}}\right\}\right)_{n,n}\pmb{I}_{\bar{M}}\right)$ and $\pmb{z}^{n,\text{IAI}}_l\sim \mathcal{CN}\left(\pmb{0},\left(\text{cov}\left\{\pmb{z}_l^{\text{IAI}}\right\}\right)_{n,n}\pmb{I}_{\bar{M}}\right)$. 

According to our previous study \cite{li2020self}, the FDPS of MS $d$ can be designed as
\begin{equation}
\pmb{x}_{d}^{\text{p}}=\left[1,e^{j2\pi\frac{(d-1)\vartheta}{\bar{M}}},e^{j2\pi\frac{2(d-1)\vartheta}{\bar{M}}},\cdots, e^{j2\pi\frac{(\bar{M}-1)(d-1)\vartheta}{\bar{M}}}\right]^T,
\end{equation}
where $\vartheta=\left\lfloor \frac{\bar{M}}{D}\right\rfloor$. With the assumption that $\bar{M}\geq DU$ {\footnote{This assumption can be easily met in 5G and beyond systems \cite{3gpp2017nr}, as a large number of subcarriers is expected, especially, in the mm-wave and THz bands. Note furthermore that, even if the distribution of MSs become denser or if the number of multipaths increases due to more complicated environments, leading to $\bar{M} < DU$, the LMMSE method proposed in \cite{li2020self} can still guarantee an acceptable estimation performance. However, for the simplicity of analysis, the case of $\bar{M} < DU$ is not further considered in this paper.}} and the $\bar{M}$ UL subcarriers are evenly distributed, $\pmb{J}_d=\pmb{X}_{d}^{\text{p}}{\pmb{\Phi}_{\text{UL}}}\pmb{F}\pmb{\varPsi}$ with respect to $d=1,...,D$ are mutually orthogonal, i.e., 
\begin{align}
\begin{cases}
\pmb{J}_d^H \pmb{J}_d =\frac{\bar{M}}{M_{\text{sum}}} \pmb{I}_U,  \\
\pmb{J}_d^H \pmb{J}_k= \pmb{0}_U,~\forall~ {d}\neq{k}.
\end{cases}
\end{align}
Then, the noisy observation of $\pmb{g}_{n,d}[i]$ can be formed as
\begin{equation}\label{eq:MDD_CF:gndobser}
\tilde{\pmb{y}}_{ld}^n=\pmb{J}_d^H \pmb{y}_{l}^n=\frac{\sqrt{p_{d}^{\text{p}}}\bar{M}}{M_{\text{sum}}}\pmb{g}_{ld}^n+\pmb{J}_d^H (\pmb{z}^{n,\text{SI}}_l+\pmb{z}^{n,\text{IAI}}_l+\pmb{n}^n_l).
\end{equation} 
Correspondingly, the MMSE estimate to $\pmb{g}_{ld}^n$ is given by
\begin{equation}\label{eq:MDD_CF:MMSE}
\hat{\pmb{g}}_{ld}^n=\frac{\frac{\beta_{ld}\sqrt{p_{d}^{\text{p}}}}{U}}{\frac{\beta_{ld}p_{d}^{\text{p}}\bar{M}}{UM_{\text{sum}}}+\left(\xi_l^{\text{SI}}I_l+\sigma^2\right)}\tilde{\pmb{y}}_{ld}^n,
\end{equation} 
where $I_l=\sum_{m \in \mathcal{M}}\sum_{d \in \mathcal{D}} \lambda_{ld}^{\text{I}}\mu_{ldm}^{\text{I}} p_{ldm}^{\text{I}}+\frac{\xi_l^{\text{IAI}}}{\xi_l^{\text{SI}}}\sum_{l^{\prime} \in \mathcal{L}\backslash \left\{l\right\}}\sum_{m \in \mathcal{M}}\sum_{d \in \mathcal{D}} \frac{\beta_{ll^{\prime}}}{M_{\text{sum}}}\lambda_{l^{\prime}d}^{\text{I}}\mu_{l^{\prime}dm}^{\text{I}} p_{l^{\prime}dm}^{\text{I}}$ denotes the summation of SI and IAI. Here the superscript `I' means that the interference arises from the supplementary DL transmission of Phase I, as shown in Figure \ref{figure-MDDCF-MulCoh}. Then, according to the properties of MMSE, $\pmb{g}_{ld}^n$ can be orthogonally decomposed into $\pmb{g}_{ld}^n=\hat{\pmb{g}}_{ld}^n+\pmb{e}_{ld}^n$, where $\pmb{e}_{ld}^n$ is the channel estimation error vector uncorrelated with $\hat{\pmb{g}}_{ld}^n$ , which has the covariance matrix 
\begin{equation}\label{eq:MDD_CF:ErrorMMSE}
\Xi_e = \text{cov}\left\{\pmb{e}_{ld}^n\right\}=\left(\frac{\beta_{ld}}{U}-\frac{\frac{\beta_{ld}^2p_{d}^{\text{p}}\bar{M}}{U^2M_{\text{sum}}}}{\frac{\beta_{ld}p_{d}^{\text{p}}\bar{M}}{UM_{\text{sum}}}+\left(\xi_l^{\text{SI}}I_l+\sigma^2\right)}\right)\pmb{I}_{U}.
\end{equation}

\subsection{SE Maximization within Two-Phase CT Interval }
As the case of the SE optimization in $T_\text{c}^1$ has already been addressed in Section \ref{sec:MDD_CF:serf}, here we only study the case of the new designed TPCT interval, i.e., $T_\text{c}^n, \ n=2,...,N_c$, as shown in Fig. \ref{figure-MDDCF-MulCoh}. 

To begin with, the average SE in Phase I can be expressed as 
\begin{equation}
\Lambda_{\text{SE}}^{\text{I}} = \frac{\gamma^{\text{P}}}{T_{\text{c}}}\frac{1}{M_{\text{sum}}}\sum_{d \in \mathcal{D}} \sum_{m \in \mathcal{M}}R(\check{\text{SINR}}_{d,m}),
\end{equation}
where $\check{\text{SINR}}_{d,m}$ is different from the previous SINR, which depends on the predicted CSI by Wiener filter. Then, the maximization of the average SE in a TPCT interval amounts to the following optimization problem:
\begin{subequations}
\label{eq:MDDCF:SE_sup}
\begin{align}
&\max_{\pmb{p}^{\text{I}},\pmb{p}^{\text{II}}, \pmb{\lambda}^{\text{I}}, \pmb{\mu}^{\text{I}}, \pmb{\lambda}^{\text{II}}, \pmb{\mu}^{\text{II}}}  \ \Lambda_{\text{SE}}^{\text{TPCT}}=\Lambda_{\text{SE}}^{\text{I}}+\Lambda_{\text{SE}}^{\text{II}}  \\ 
\text{s.t.} \ &~~(\ref{eq:MDDCF:SE_formulation}\text{b}), (\ref{eq:MDDCF:SE_formulation}\text{c}), (\ref{eq:MDDCF:SE_formulation}\text{d}), (\ref{eq:MDDCF:SE_formulation}\text{e}), (\ref{eq:MDDCF:SE_formulation}\text{f}), (\ref{eq:MDDCF:SE_simpli}\text{b}), (\ref{eq:MDDCF:SE_simpli}\text{c}), \\
&~~\lambda_{ld}^{\text{I}} \in \left\{0,1\right\}, \ \forall l \in \mathcal{L}, d \in \mathcal{D}, \\
&~~\mu_{ldm}^{\text{I}} \in \left\{0,1\right\},\ \forall l \in \mathcal{L}, d \in \mathcal{D}, m \in \mathcal{M}, \\
&~~\sum_{m \in \mathcal{M}}\sum_{d \in \mathcal{D}} \lambda_{ld}^{\text{I}}\mu_{ldm}^{\text{I}} p_{ldm}^{\text{I}}\leq P_l, \ \forall l \in \mathcal{L},
\end{align}
\end{subequations}
where $\pmb{p}^{\text{I}/\text{II}}=\left\{p_{ldm}^{\text{I}/\text{II}}\right\}$, $\pmb{\lambda}^{\text{I}/\text{II}}=\left\{\lambda_{ld}^{\text{I}/\text{II}}\right\}$, $\pmb{\mu}^{\text{I}/\text{II}}=\left\{\mu_{ldm}^{\text{I}/\text{II}}\right\}$. The $\Lambda_{\text{SE}}^{\text{I}}$ is only related to $\pmb{p}^{\text{I}}$, while the $\Lambda_{\text{SE}}^{\text{II}}$ depends on both $\pmb{p}^{\text{I}}$ and $\pmb{p}^{\text{II}}$. (\ref{eq:MDDCF:SE_formulation}b)-(\ref{eq:MDDCF:SE_formulation}f) and (\ref{eq:MDDCF:SE_sup}c)-(\ref{eq:MDDCF:SE_sup}e) explain the power constraints and binary selections for Phase II and Phase I, respectively. (\ref{eq:MDDCF:SE_simpli}b) and (\ref{eq:MDDCF:SE_simpli}c) are the QoS constraints on Phase II, meaning that the objective in Phase I is to increase the overall SE as much as possible, while the QoS constraints are only imposed on Phase II. However, as shown in \eqref{eq:MDD_CF:MMSE} and \eqref{eq:MDD_CF:ErrorMMSE}, $\pmb{p}^{\text{I}}$ in Phase I is included in the error vector of the time-domain estimated CSI. Hence, it implicitly affects the following DL/UL transmission in Phase II, which makes the above optimization problem intractable to obtain $\pmb{p}^{\text{I}}$ and $\pmb{p}^{\text{II}}$ at the same time. 

In order to circumvent this problem, we introduce a two-step iterative algorithm to split the original optimization problem of \eqref{eq:MDDCF:SE_sup} into two sub-problems, namely the optimizations of $\Lambda_{\text{SE}}^{\text{I}}$ and $\Lambda_{\text{SE}}^{\text{II}}$, which are expressed as follows: 
\begin{subequations}
\label{eq:MDDCF:SE_sub1}
\begin{align}
\text{Sub 1:} \ \ \ &\max_{\pmb{p}^{\text{I}},\pmb{\lambda}^{\text{I}},\pmb{\mu}^{\text{I}}} \ \Lambda_{\text{SE}}^{\text{I}} \\ 
\text{s.t.} \ &~~(\ref{eq:MDDCF:SE_sup}\text{c}), (\ref{eq:MDDCF:SE_sup}\text{d}), (\ref{eq:MDDCF:SE_sup}\text{e}),\\
&~~\xi_l^{\text{SI}}I_l\leq \eta_l \sigma^2, \ \forall l,
\end{align}
\end{subequations}
and
\begin{subequations}
\label{eq:MDDCF:SE_sub2}
\begin{align}
\text{Sub 2:} \ \ \ &\max_{\pmb{p}^{\text{II}},\pmb{\lambda}^{\text{II}},\pmb{\mu}^{\text{II}}} \ \Lambda_{\text{SE}}^{\text{II}}  \\ 
\text{s.t.} \ &~~(\ref{eq:MDDCF:SE_sup}\text{b}).
\end{align}
\end{subequations}
Note that, since the above two sub-problems are strongly correlated, we introduce (\ref{eq:MDDCF:SE_sub1}c) in the first sub-problem to build the connection between them, where $\eta_l$ is a configurable factor for controlling the performance of channel estimation, which is elaborated in the following remark. 
\begin{remark}
The value of $\eta_l$ directly influences the accuracy of channel estimation, and further affects the DL/UL transmissions in Phase II. For instance, as $\eta_l$ decreases, to meet the constraint of (\ref{eq:MDDCF:SE_sub1}c), APs have to decrease their transmit power so as to reduce $I_l$ at the cost of compromised SE in Phase I. By contrast, a smaller $I_l$ results in the smaller estimation error in \eqref{eq:MDD_CF:ErrorMMSE}, which in turn results in a higher SE in Phase II. Hence, iteratively updating $\eta_l$ is essential for the proposed optimization problem. In our implementation, we resort to the bisection method to find the optimal $\eta_l$ in the range of $\left[0,\eta^{\text{max}}\right]$, where $0$ implies that the supplementary DL transmission in Phase I is not activated. In this case, there is no extra burden on the channel estimation in the current CT interval. By contrast, $\eta^{\text{max}}=\frac{\xi_l^{\text{SI}}I_l^{\text{max}}}{\sigma^2}$ denotes the maximum effective value, where $I_l^{\text{max}}$ means that all APs transmit at their highest possible power for the supplementary DL transmission in Phase I.
\end{remark}

Therefore, in the proposed algorithm, we first set $\eta_l=0$, and obtain the total initial SE of $\Lambda_{\text{SE}}^{\text{TPCT}(0)}=\Lambda_{\text{SE}}^{\text{I}(0)}+\Lambda_{\text{SE}}^{{\text{II}}(0)}$, where $\Lambda_{\text{SE}}^{\text{I}(0)}=0$.  Then, $\eta_l$ is iteratively updated based on the bisection method until convergence is achieved. The overall algorithm is summarized in Algorithm \ref{MDDCF:al3}.

\begin{algorithm}
\caption{SE Maximization within TPCT Interval} 
\label{MDDCF:al3}
\textbf{Initialization:} \\
Set $t=0$, $t^{'}=0$ $\eta_l \in [\eta^{a},\eta^{b}], \ \eta^{a}=\eta_l^{(0)}=0, \forall l, \eta^{b}=\eta^{\text{max}}$\;
Solve Sub 1 and Sub 2 problems in sequence using the QT method, and obtain $\Lambda_{\text{SE}}^{\text{I}(t)}$ and $\Lambda_{\text{SE}}^{{\text{II}}(t)}$\;
Compute $\Lambda_{\text{SE}}^{\text{TPCT}(t)}=\Lambda_{\text{SE}}^{{\text{I}}(t)}+\Lambda_{\text{SE}}^{\text{II}(t)}$\;
\LT{}{
\Repeat{$\textup{Convergence}$}{
Set $t = t + 1$, and update $\eta_l^{(t)}=\frac{\eta^{a}+\eta^{b}}{2}$\;
Implement Step 3 and Step 4\;
\eIf{\em{$\Lambda_{\text{SE}}^{\text{TPCT}(t)}<\Lambda_{\text{SE}}^{\text{TPCT}(t^{'})}$}}{$\eta^{b}=\eta_l^{(t)}$;}{$\eta^{a}=\eta_l^{(t)}$, $t^{'}=t$\;} 
}
}
\KwOut{$\pmb{p}^{\text{I}},\pmb{\lambda}^{\text{I}},\pmb{\mu}^{\text{I}},\pmb{p}^{\text{II}},\pmb{\lambda}^{\text{II}},\pmb{\mu}^{\text{II}},\Lambda_{\text{SE}}^{\text{TPCT}\ast}$}
\end{algorithm}

\section{Simulation Results and Discussion}\label{sec:MDD_CF:sim}

In this section, we provide the numerical results for comparison of MDD-, IBFD- and TDD-CF in distributed CF-mMIMO systems in terms of SE. All the simulations are implemented on MATLAB using the CVX tool \cite{cvx}.

\subsection{Parameters and Setup}
The following results are obtained based on either one CT interval or one radio frame consisting of one CT interval and ($N_{\text{c}}-1$) TPCT intervals. In our studies, we assume that the subcarrier spacing is 15 KHz (with the central carrier frequency of 5 GHz), and each OFDM symbol with cyclic prefix spans $t_{\text{s}}=71.35 \mu$s \cite{3gpp2017nr}. We assume that all MSs move at a relative speed of $v=5$ km/h, while APs are stationary. Hence, the coherence time is $t_{\text{ct}}\approx21.6$ ms. In this case, one CT interval can accommodate about $T_c=t_{\text{ct}}/t_{\text{s}}\approx 300$ OFDM symbols. Regarding the frame structure\cite{3gpp2017nr,access2009physical}, we assume that for all systems, the pilot transmission requires 15 OFDM symbol durations within one CT (i.e., $\gamma^{\text{P}}=15$, as shown in Fig. \ref{figure-MDDCF-Coh} and Fig. \ref{figure-MDDCF-MulCoh}), while for TDD systems, the GP lasts for 15 OFDM symbol durations in a TDD radio frame (i.e., $\gamma^{\text{G}}=15$). 

Assuming the delay spread of 40 ns \cite{zhao2002propagation}, the coherence bandwidth is $B_c\approx 4.2$ MHz. To make subcarrier signals experience flat fading, we assume that the total number of subcarriers is $M_{\text{sum}}=48$ for all systems, while in MDD systems, the numbers of DL and UL subcarriers are $M=32$ and $\bar{M}=16$, respectively. For the sake of fair comparison, in TDD systems, the ratio of DL/UL transmission times is set to the same ratio of DL/UL subcarrier numbers in MDD-CF, i.e., $\gamma_{\text{TDD}}^{\text{DL}}=180$ and $\gamma_{\text{TDD}}^{\text{UL}}=90$, if the CT interval has 300 symbols, while the other 30 symbols are pilot symbols and GP intervals. Then, the total SE of TDD-CF within one CT or one radio frame can be expressed as
\begin{align}
\Lambda_{\text{SE}}^{\text{TDD}}&=\frac{1}{M_{\text{sum}}}\sum_{d \in \mathcal{D}} \Big(\sum_{m=1}^{M_{\text{sum}}}\frac{\gamma_{\text{TDD}}^{\text{DL}}}{T_{\text{c}}}R(\text{SINR}_{d,m}) \nonumber \\
&+\sum_{\bar{m}=1}^{M_{\text{sum}}}\frac{\gamma_{\text{TDD}}^{\text{UL}}}{T_{\text{c}}}R(\text{SINR}_{d,\bar{m}})\Big).
\end{align}   

Owing to the DL transmission in Phase I, as shown in Figure \ref{figure-MDDCF-MulCoh}, the total SE achieved by the MDD-CF in one radio frame is different from that in one CT interval, which can be expressed as
\begin{equation}
\Lambda_{\text{SE}}^{\text{MDD-RF}}=\frac{1}{N_c}\Lambda_{\text{SE}}+\frac{N_c}{N_c-1}\Lambda_{\text{SE}}^{\text{TPCT}},
\end{equation}
where $\Lambda_{\text{SE}}$ and $\Lambda_{\text{SE}}^{\text{TPCT}}$ are derived based on \eqref{eq:MDDCF:MDDSE} and \eqref{eq:MDDCF:SE_sup}, respectively. The SE computation of IBFD-CF within one radio frame is similar to that of MDD-CF. 

We assume that APs in both IBFD-CF and MDD-CF schemes with distributed operation are capable of providing 30 dB IAI suppression in the propagation/analog domain with the existing approaches as mentioned previously. Provided that the 12-bit ADC is employed, MDD-CF can suppress IAI up to 72 dB (i.e., $\xi_l^{\text{MDD-IAI}}=-72$ dB, $\forall l$), of which 42 dB is attributed to the digital cancellation by FFT, as analyzed in Fig. \ref{figure-MDDCF-SICdia}.  On the contrary, as IBFD-CF can hardly cope with the IAI in digital domain, we assume that it can provide no more than 10 dB of digital-domain IAI mitigation\footnote{In fact, authors in \cite{nguyen2020spectral} presented several digital-domain methods to suppress IAI, which end up with only providing about 10 dB of IAI mitigation. The study implies that the IAI suppression in IBFD-CF scheme with centralized operation is very challenging, not to mention the CF systems operated in a distributed way.}, yielding $\xi_l^{\text{IBFD-IAI}}=-40$ dB, $\forall l$.  In the context of the IMI suppression, since the propagation/analog-domain IMI methods are relatively complicated to implement at the single-antenna MS, it is assumed that $\xi_d^{\text{IBFD-IMI}}=0$ dB, $\forall d$, while $\xi_d^{\text{MDD-IMI}}=-42$ dB, $\forall d$, owing to the FFT-assisted suppression. 

We assume that all APs and MSs are uniformly and randomly distributed within a square of size $\left(S_{\text{D}} \times S_{\text{D}}\right)$ $\text{m}^2$. The large-scale fading coefficients $\beta \in\left\{\beta_{ld}, \beta_{ll^{\prime}}, \beta_{dd^{\prime}}\right\}$ is given by $\beta[\text{dB}] = \text{PL} + \sigma_{\text{sh}}z$, where the shadowing is characterized by $\sigma_{\text{sh}}z$ with a standard deviation of $\sigma_{\text{sh}}=4$ dB and $z \sim \mathcal{N}(0,1)$. The PL exponent is assumed to be -3.8 \cite{demir2021foundations}. Unless otherwise noted, the other parameters are listed in Table \ref{Table:MDDCF:para}. 
\begin{table}
\caption{Simulation parameters}
\centering
\begin{tabular}{|l|l|}
\hline
Default parameters & Value  \\ \hline
Number of APs, MSs ($L,D$)  & (12, 4)  \\ \hline
Number of antennas per AP ($N$) & 6 \\ \hline
AP's and MS's Power budget ($P_l, P_d, \forall l,d$) & $(10,1) \ W$ \\ \hline
UL pilot power ($p_d^{\text{p}}, \forall d$) & 0.6 $W$ \\ \hline
MSs' QoS requirements  ($\chi_{\text{DL}}, \chi_{\text{UL}}$) & (0.5, 0.1)  \\ \hline
Noise power ($\sigma^2$)  & -94 dBm   \\ \hline
Delay taps ($U$) & 4 \\ \hline 
Number of CT intervals within one frame ($N_c$) & $10$  \\ \hline
Residual SI level at AP ($\xi_l^{\text{MDD-SI}},\xi_l^{\text{IBFD-SI}}, \forall l$) &-130 dB\\ \hline
Residual SI level at MS ($\xi_d^{\text{MDD-SI}},\xi_d^{\text{IBFD-SI}}, \forall d$) & -120 dB \\ \hline 
Cell length ($S_{\text{D}}$) & 400 m \\ \hline
\end{tabular}
\label{Table:MDDCF:para}
\end{table}

\subsection{Case of One CT Interval}

\begin{figure}
\centering
\includegraphics[width=0.8\linewidth]{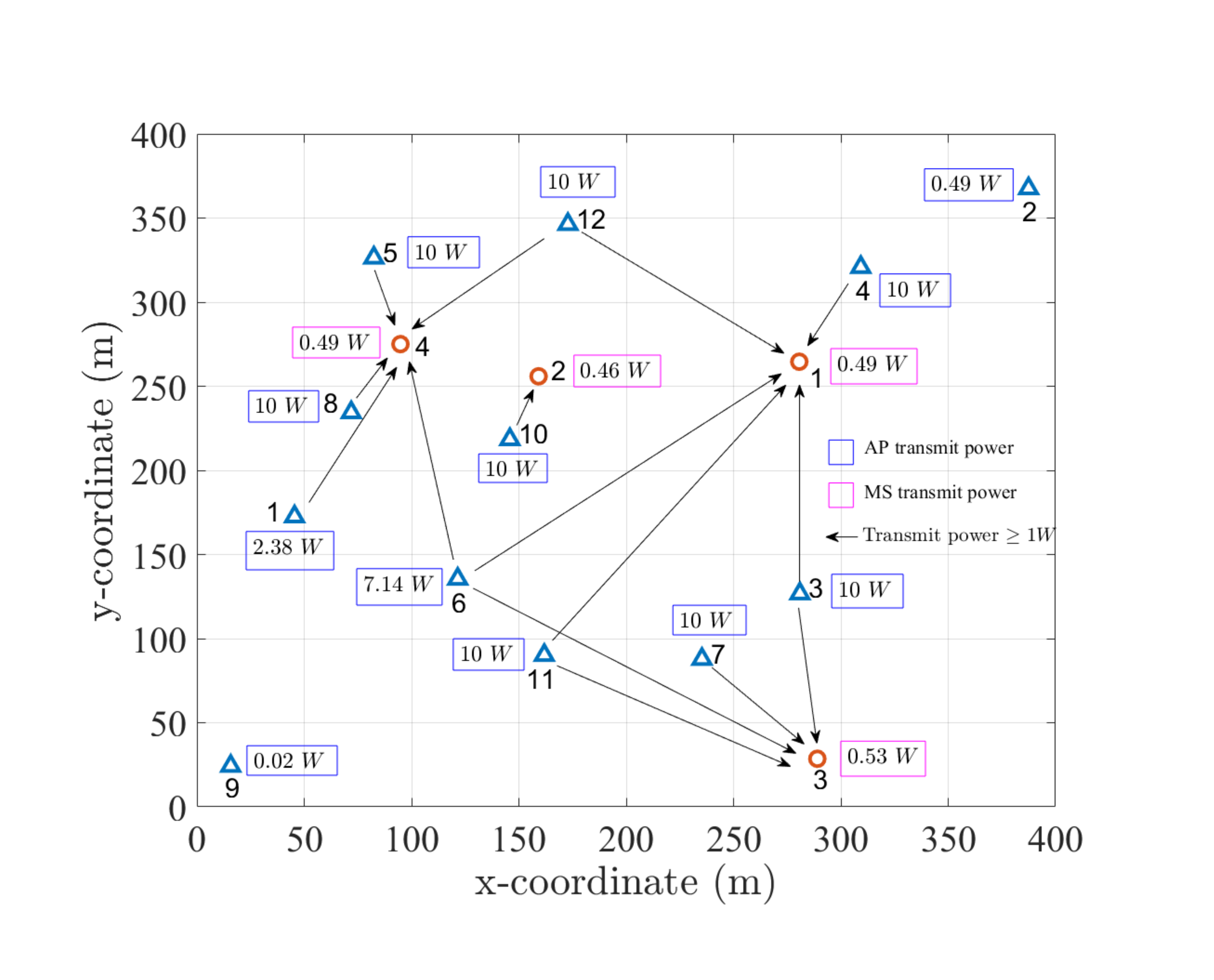}
\caption{AP-selection and power-allocation results obtained by Algorithm 1 in MDD-CF scheme.}
\label{figure-MDDCF-figsup}
\end{figure}

\begin{figure}
\centering
\includegraphics[width=0.8\linewidth]{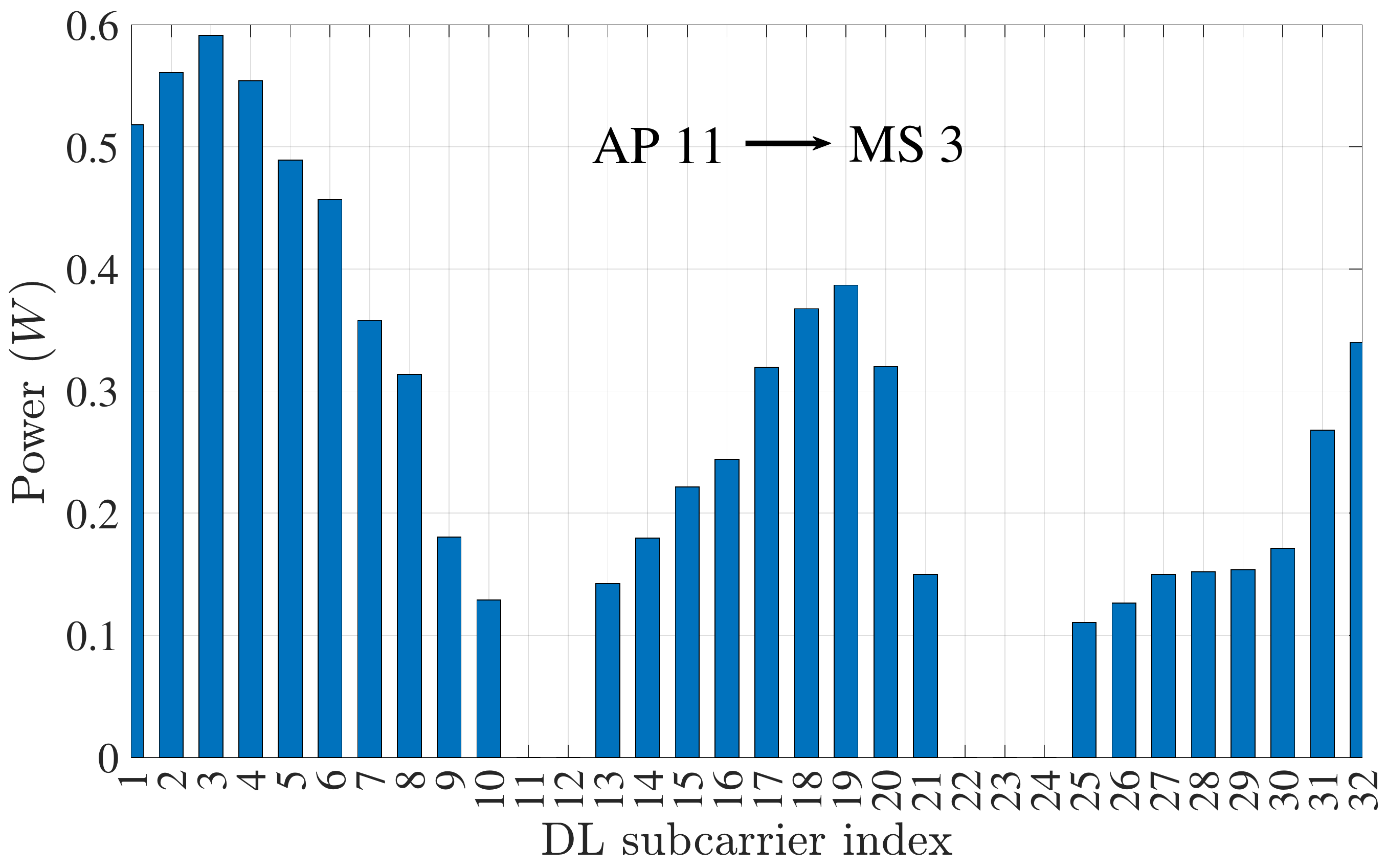}
\caption{Power- and subcarrier-allocation results of AP 11 to MS 3 achieved by Algorithm 1.}
\label{figure-MDDCF-figsup2}
\end{figure}

\begin{figure}
\centering
\includegraphics[width=0.7\linewidth]{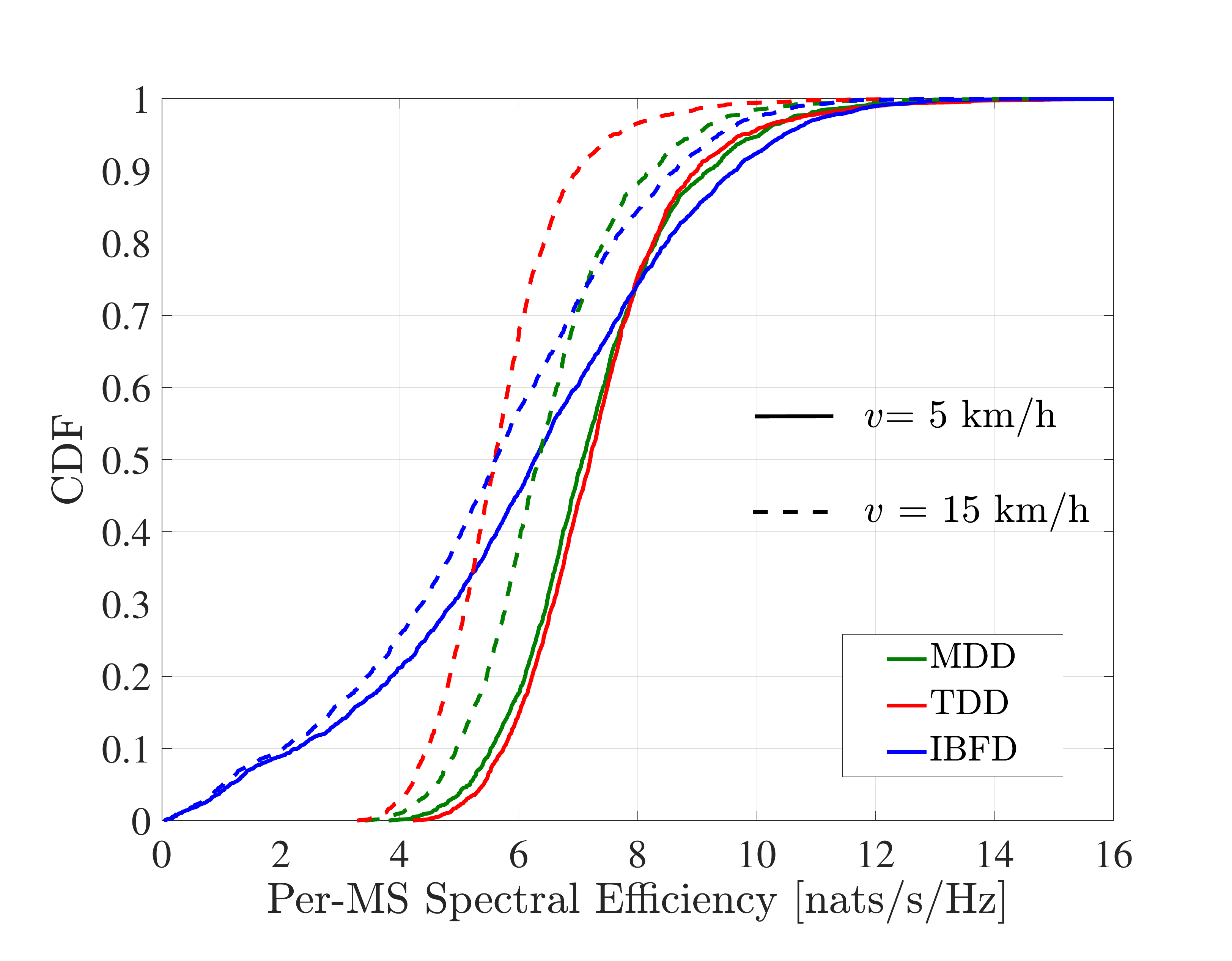}
\caption{Cumulative distribution of the per-MS in different CF schemes.}
\label{figure-MDDCF-fig1}
\end{figure}

Given a randomly generated network layout, Fig.~\ref{figure-MDDCF-figsup} shows the AP-selection and power-allocation results attained by Algorithm~\ref{MDDCF:al1} over one CT interval. The numbers within the blue and pink squares denote the AP and MS's transmit power, respectively. The black arrowed lines denote the DL links with the transmit power larger than $1~W$. It can be observed from the figure that, subject to the limited SIC capability, the transmit power of a MS for UL transmission is much less than the budget power. Moreover, some APs, such as AP~$1$, AP~$2$ and AP~$9$, located far away from MSs are controlled by the algorithm to reduce their transmit power to avoid IAI on the other APs. Furthermore, in Fig.~\ref{figure-MDDCF-figsup2}, as an example, we show the detailed power- and subcarrier-allocation results of AP 11 to MS 3. Explicitly, based on \eqref{eq:MDD_CF:ldmpower}, AP 11 allocates different power to the 32 DL subcarriers, where some subcarriers, namely subcarriers 11, 12, 22, 23 and 24,  with very small power are not assigned to MS 3.

In the following simulations, we focus on the performance comparison of MDD-, TDD- and IBFD-CF schemes over one CT interval, as shown in Fig. \ref{figure-MDDCF-Coh}. First, the performance of three schemes is presented in Fig. \ref{figure-MDDCF-fig1}. From the results, when $v=5$ km/h (corresponding to $T_c=300$ OFDM symbols), TDD-CF slightly outperforms MDD-CF in terms of the 95\%-likely per-MS SE as the result of the IAI and IMI being not perfectly canceled in MDD-CF. Although IBFD-CF provides substantially higher SE for the strongest MSs, its 95\%-likely performance is only about 0.5 nats/s/Hz, which is 10 times lower than MDD- and TDD-CF (5 nats/s/Hz). The rationale is that there are usually only a small number of MSs that can benefit from IBFD mode. These MSs should be located far away from the neighboring MSs and their serving APs are also not close to each other. Otherwise, to meet the QoS requirements, see (\ref{eq:MDDCF:SE_formulation}g) and (\ref{eq:MDDCF:SE_formulation}h), the DL/UL transmit power has to be decreased so as to reduce the negative effect of IAI/IMI. Furthermore, the performance of these schemes is very different when the relative speed between MSs and APs is increased to 15 km/h, which corresponds to 100 OFDM symbols transmitted in one CT interval. Fig. \ref{figure-MDDCF-fig1} clearly shows that the performance of TDD-CF deteriorates quickly as the relative speed increases. This is because the length of GP is constant, but its proportion within one CT interval becomes larger when the relative speed goes up, which therefore leads to the reduced SE. 

%Moreover, the results for applying the PL model in \cite{nguyen2020spectral,xia2021joint} to mitigate the IAI/IMI in IBFD-CF scheme are also presented in Fig. \ref{figure-MDDCF-fig1}. Note that, according to \cite{nguyen2020spectral,xia2021joint}, although both signal and interference attenuate greatly due to the extremely large PL model (signal power attenuates more than 140 dB at a distance of 10 meters), the centralized CF-mMIMO can still achieve fairly good SE performance, owing to the coordinated beamforming with a large number of APs. However, it can be seen from Fig. \ref{figure-MDDCF-fig1} that in our proposed distributed CF-mMIMO, each AP employs only a small number of antennas and works independently. In this case, it is hard to obtain enough beamforming gain to compensate for the large-scale fading, which leads to the outage of IBFD-CF. Therefore, we can conclude that the large PL model for IAI/IMI suppression is impractical for distributed CF-mMIMO systems, as analyzed in Section I.

\begin{figure}[]
\centering
\subfigure[]{
\includegraphics[width=0.7\linewidth]{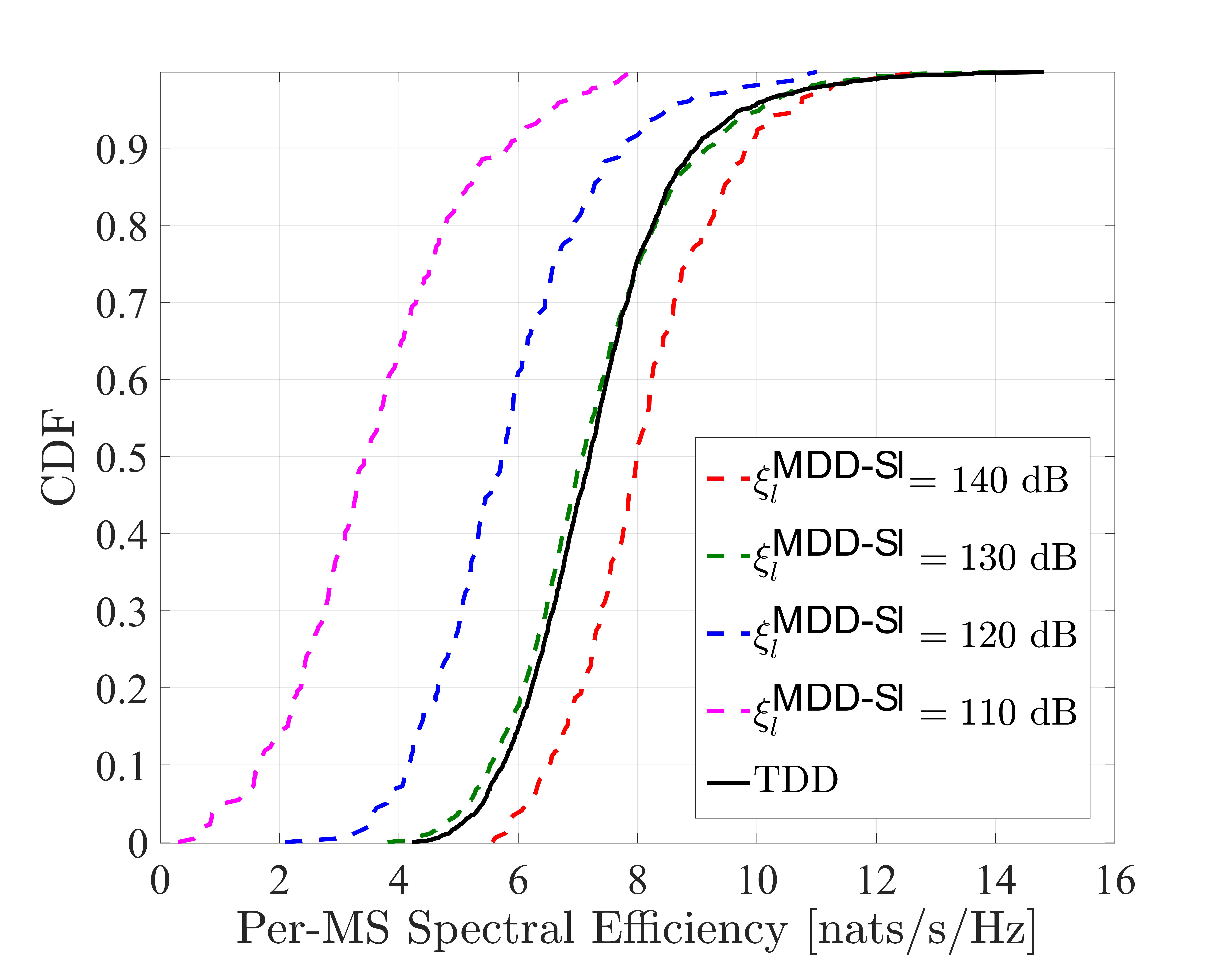}
}
\subfigure[]{
\includegraphics[width=0.7\linewidth]{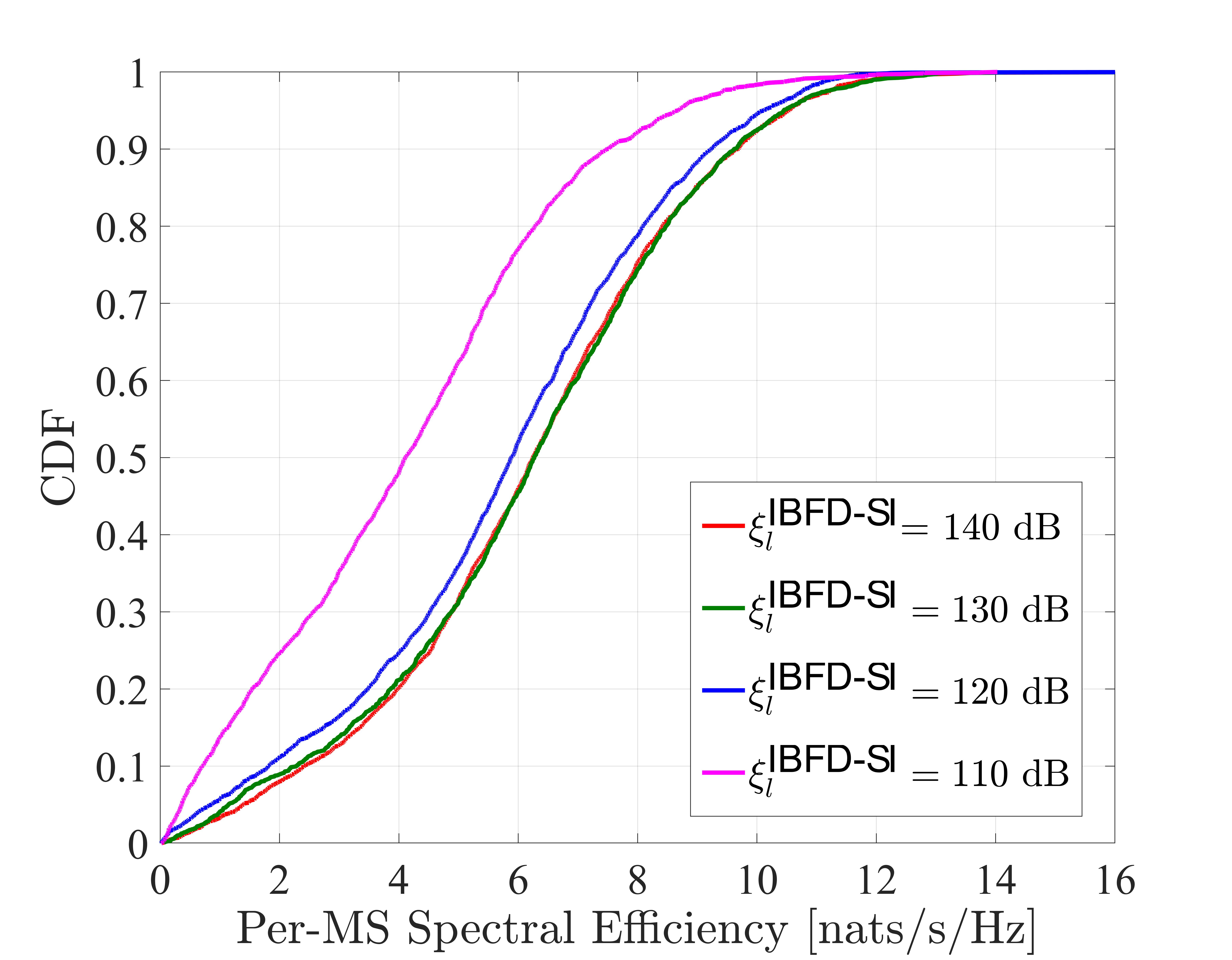}
}
\caption{Cumulative distribution versus Per-MS SE in MDD- and IBFD-CF schemes with different SIC.}
\label{figure-MDDCF-fig3}
\end{figure}

The influence of the SIC capability at APs and MSs in the MDD- and IBFD-CF schemes is demonstrated in Fig. \ref{figure-MDDCF-fig3}. Note that the value of $\xi_d^{\text{MDD-SI}}$ is not shown in the figures, which is always set to be 10 dB lower than that of $\xi_l^{\text{MDD-SI}}$. From Fig. \ref{figure-MDDCF-fig3}(a), it can be seen that the higher SIC capability the APs and MSs have, the better performance the MDD-CF can attain. Specifically, when $\xi_l^{\text{MDD-SI}}=-140$ dB, MDD-CF outperforms TDD-CF with a higher performance upper bound. This is because MDD mode is free of GPs, and hence have more time resource for data transmissions. 
%This can be verified by two aspects: ($i$) MDD systems are free of GP; ($ii$) MDD can benefit from the large SIC capability and relatively easier management of IAI and IMI. 
On the contrary, from Fig. \ref{figure-MDDCF-fig3}(b), the interesting result is that when the $\xi_l^{\text{IBFD-SI}}$ reaches at 120 dB, the further increased SIC capability can hardly improve the performance of IBFD-CF. This implies that the IBFD-CF is mainly IAI/IMI-limited. 

Fig. \ref{figure-MDDCF-fig2} evaluates the influence of cell size on the performance of MDD- and IBFD-CF schemes. It can be clearly seen that as the cell size decreases, meaning that the distribution of MS and AP become denser, the 95\%-likely per-MS SE of the MDD-CF is increased by 4 nats/s/Hz, while the IBFD-CF fails to obtain the significant gain from the denser network deployment. The reason is that, in IBFD-CF, the denser distribution of APs and MSs also means the shorter interference links of AP-AP and MS-MS. In this case, the large-scale fading is unable to provide enough IAI and IMI mitigation, consequently, both the APs and MSs have to decrease their transmit power so as to control the level of interferences. By contrast, the FFT-assisted IAI and IMI cancellation in MDD-CF can efficiently mitigate the interference in digital domain. Hence, MDD-CF can benefit significantly from the short-distance communications. Moreover, to further illustrate the effect of IAI, we assume that the APs in both MDD-CF and IBFD-CF schemes have an extra 30 dB of IAI suppression capability, when $S_{\text{D}}=100$ m. Then, it can be seen that the IBFD-CF has a significant improvement in terms of both the 95\%-likely and the median per-MS SE. However, the 95\%-likely per-MS SE of the IBFD-CF with extra IAI suppression is still 4 nats/s/Hz lower than that of the MDD-CF without extra IAI suppression, due to the less IMI mitigation. We can conclude from Fig. \ref{figure-MDDCF-fig3} and Fig. \ref{figure-MDDCF-fig2} that the IBFD is not a desirable mode for FD-style operation in distributed CF-mMIMO systems, while the MDD mode is more promising.  

\begin{figure}
\centering
\includegraphics[width=0.7\linewidth]{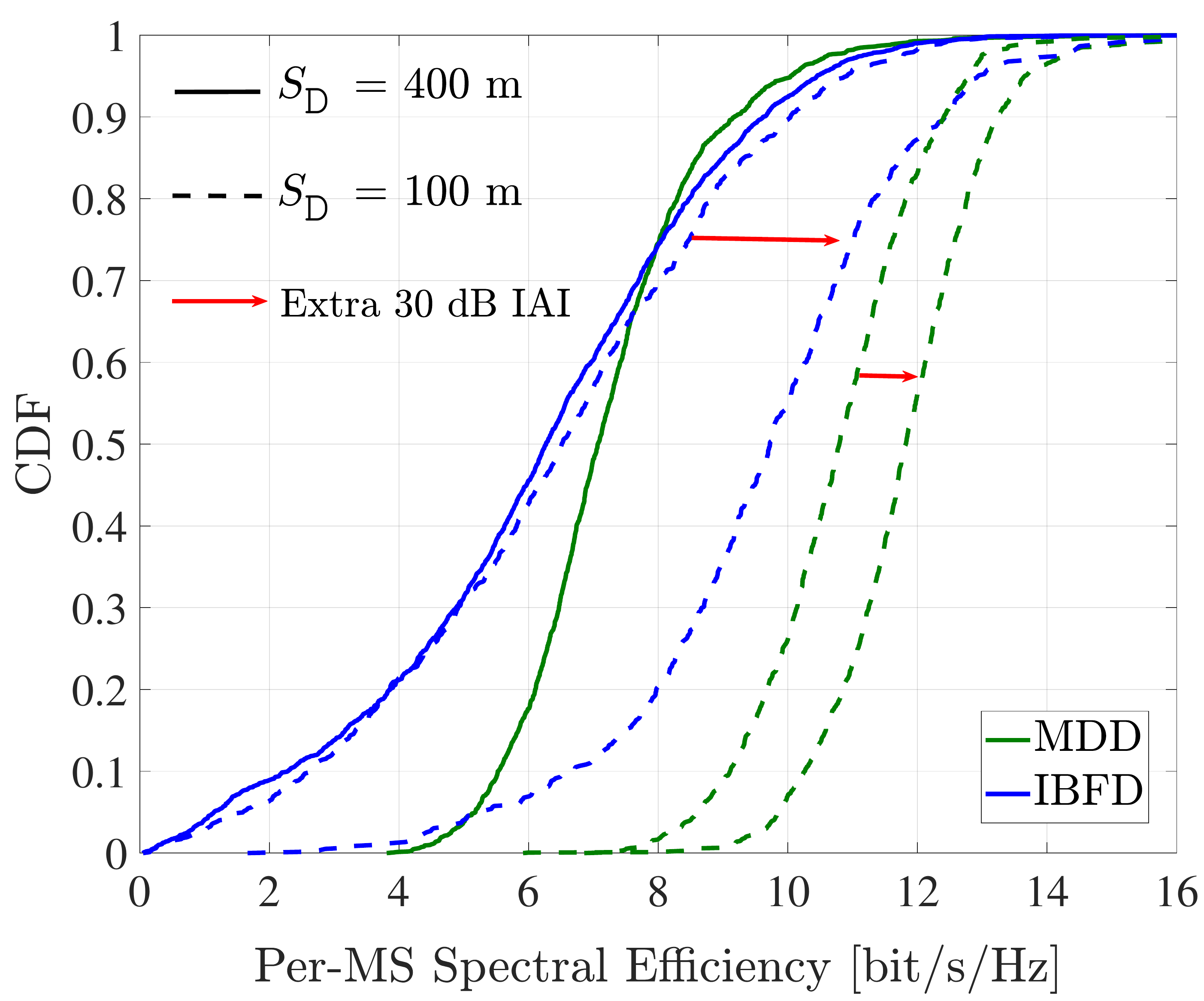}
\caption{Cumulative distribution versus Per-MS SE in MDD- and IBFD-CF schemes with different cell size.}
\label{figure-MDDCF-fig2}
\end{figure}

The convergence behavior of the Algorithm \ref{MDDCF:al1} in MDD-CF is shown in Fig. \ref{figure-MDDCF-fig6}, where each iteration period denotes one cycle, which includes step 3 - step 13 in the Algorithm. The lowest starting point at the beginning of each iteration period is due to the reinitialization at step 4. From the figure, it can be easily seen that Algorithm \ref{MDDCF:al1} can reach convergence only after two iteration periods. Concerning the complexity, Algorithm \ref{MDDCF:al1} is mainly attributed to solving the optimization problem \eqref{eq:MDDCF:QT}, regardless of the computation of the ZF matrices and the initialization of the involved variables. Hence, according to \cite{peaucelle2002user}, the approximated complexity of Algorithm \ref{MDDCF:al1} is $\mathcal{O}\big((LDM+2DM+3D\bar{M})^2(L+D+3DM_{\text{sum}})^{2.5}+(L+D+3DM_{\text{sum}})^{3.5}\big)$ per iteration. From these results, one can see that the numbers of MSs and subcarriers are the two dominant factors limiting the scalability of MDD-CF, where the MSs are densely distributed with a large number of available subcarriers. To this end, the more scalable approaches for the MDD-CF scheme deserve further investigation in the future.

\begin{figure}
\centering
\includegraphics[width=0.7\linewidth]{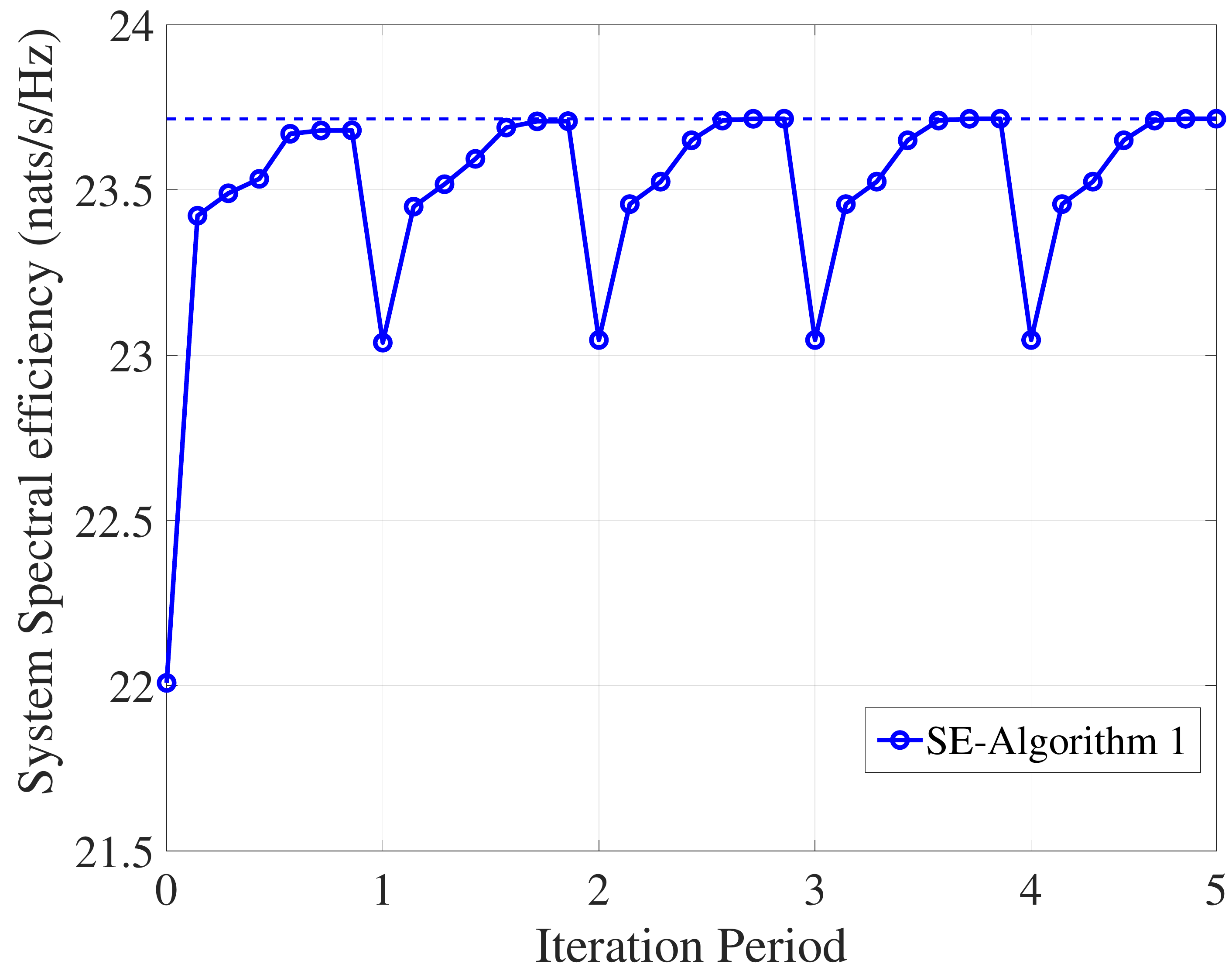}
\caption{SE convergence behavior of Algorithm 1 in MDD-CF scheme.}
\label{figure-MDDCF-fig6}
\end{figure}

\subsection{Case of One Radio Frame}

%In this subsection, we demonstrate the advantages of MDD, which can fully exploit the time resource within one radio frame with the aid of Algorithm \ref{MDDCF:al3}. Note that, as shown in Fig. \ref{figure-MDDCF-MulCoh}, the main objective of the supplementary DL transmission along with the UL training is to further increase the average SE within one frame. Hence, we only focus on the SE rather than EE in the following studies.    

In this subsection, we demonstrate the advantages of MDD-CF, which can fully exploit the time resource within one radio frame with the aid of Algorithm \ref{MDDCF:al3} to improve the SE performance, as detailed below.

% Note that, as shown in Fig. \ref{figure-MDDCF-MulCoh}, the main objective of the supplementary DL transmission along with the UL training is to further increase the average SE within one frame. Hence, we only focus on the SE rather than EE in the following studies.    

\begin{figure}
\centering
\includegraphics[width=0.7\linewidth]{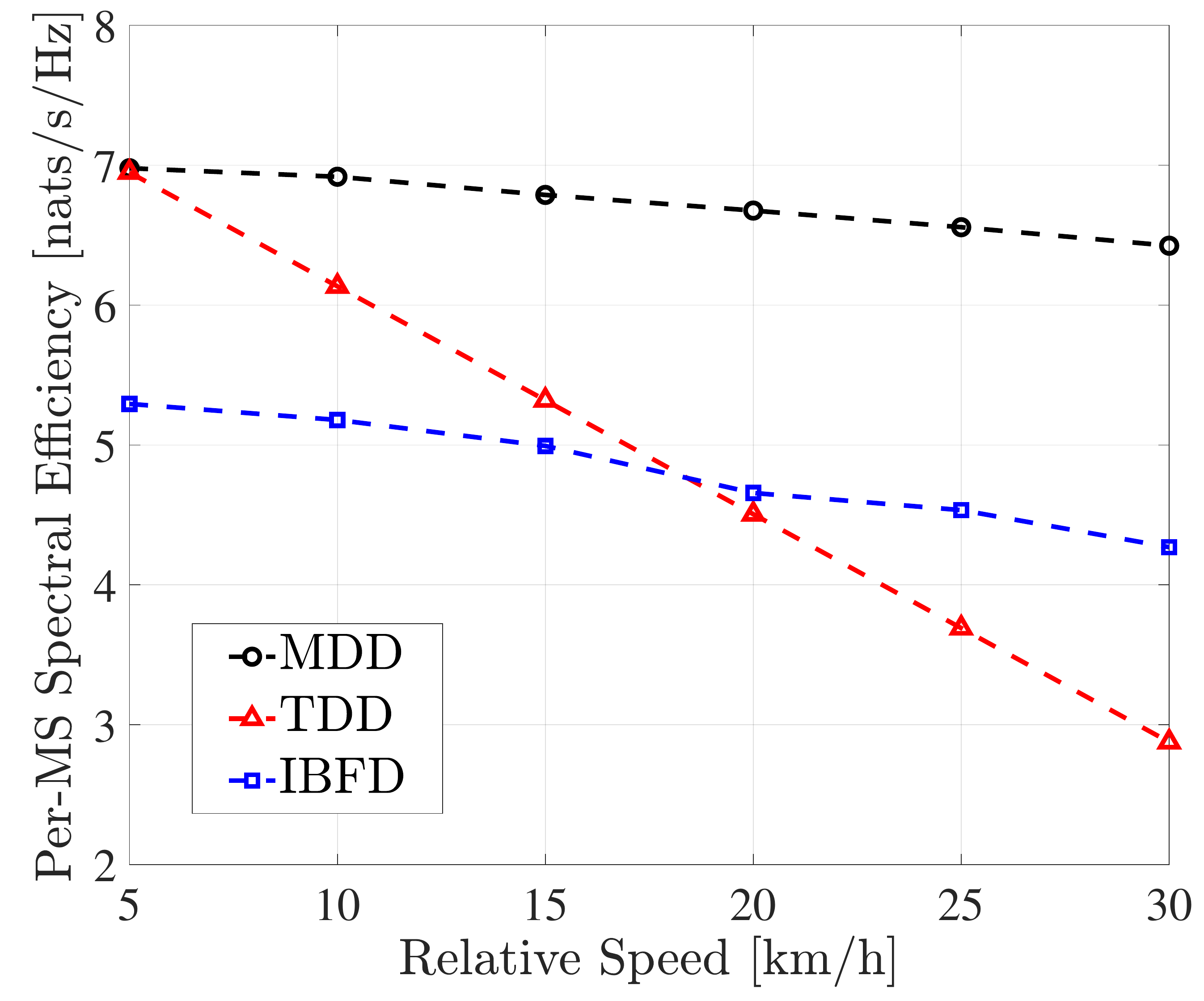}
\caption{Per-MS SE of one radio frame versus the relative speed.}
\label{figure-MDDCF-fig5}
\end{figure}

Fig. \ref{figure-MDDCF-fig5} evaluates the per-MS SE of the three different schemes over a range of relative speeds. Explicitly, the SE performance of TDD-CF deteriorates quickly, as the relative speed increases. The explanation is as follows. In TDD mode, DL/UL transmissions can only be implemented in a sequential manner. When the relative speed between MSs and APs increases, the reduced CT interval leads to the less time for data transmission. By contrast, with the aid of our proposed Algorithm \ref{MDDCF:al3}, both MDD-CF and IBFD-CF are robust to the high-mobility scenarios, as expected. In particular, as MDD-CF can benefit from the lower IAI/IMI as well as the FD operation during both Phase I and II, it doubling the per-MS SE compared to TDD-CF, at the speed of 30 km/h. As shown in Fig. \ref{figure-MDDCF-fig5}, the IBFD-CF is also outperformed by the MDD-CF, due to the larger residual IAI/IMI, which not only affect the UL and DL communications in Phase II, but also the channel estimation in Phase I. However, it can eventually surpass the TDD-CF at about 18 km/h and achieve 1 nats/s/Hz higher per-MS SE than the TDD-CF at 30 km/h, due to the fact that Phase I becomes increasingly paramount in TPCT interval.

\begin{figure}
\centering
\includegraphics[width=0.7\linewidth]{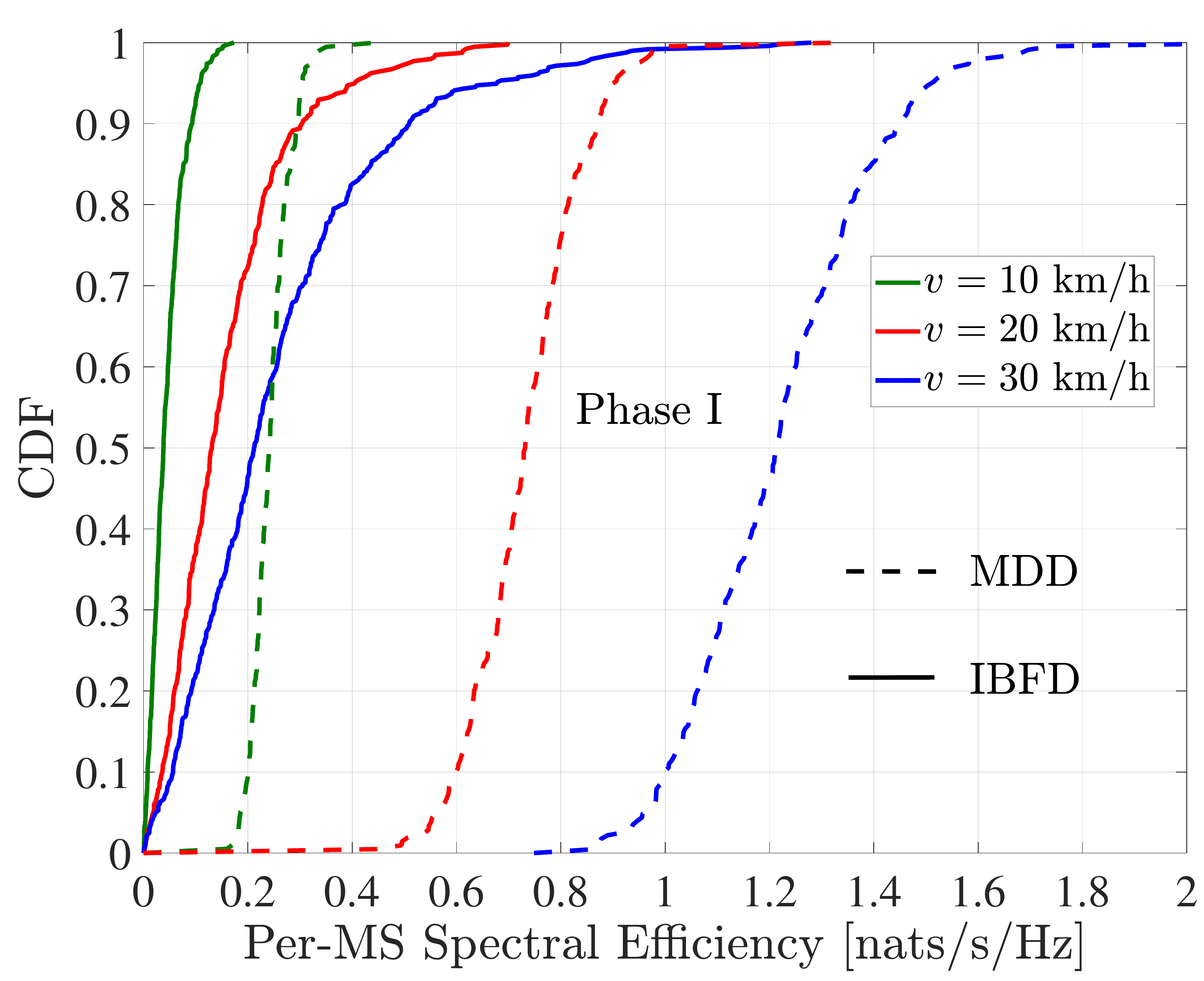}
\caption{Cumulative distribution versus per-MS SE in Phase I.}
\label{figure-MDDCF-fig5_1}
\end{figure}

\begin{figure}
\centering
\includegraphics[width=0.7\linewidth]{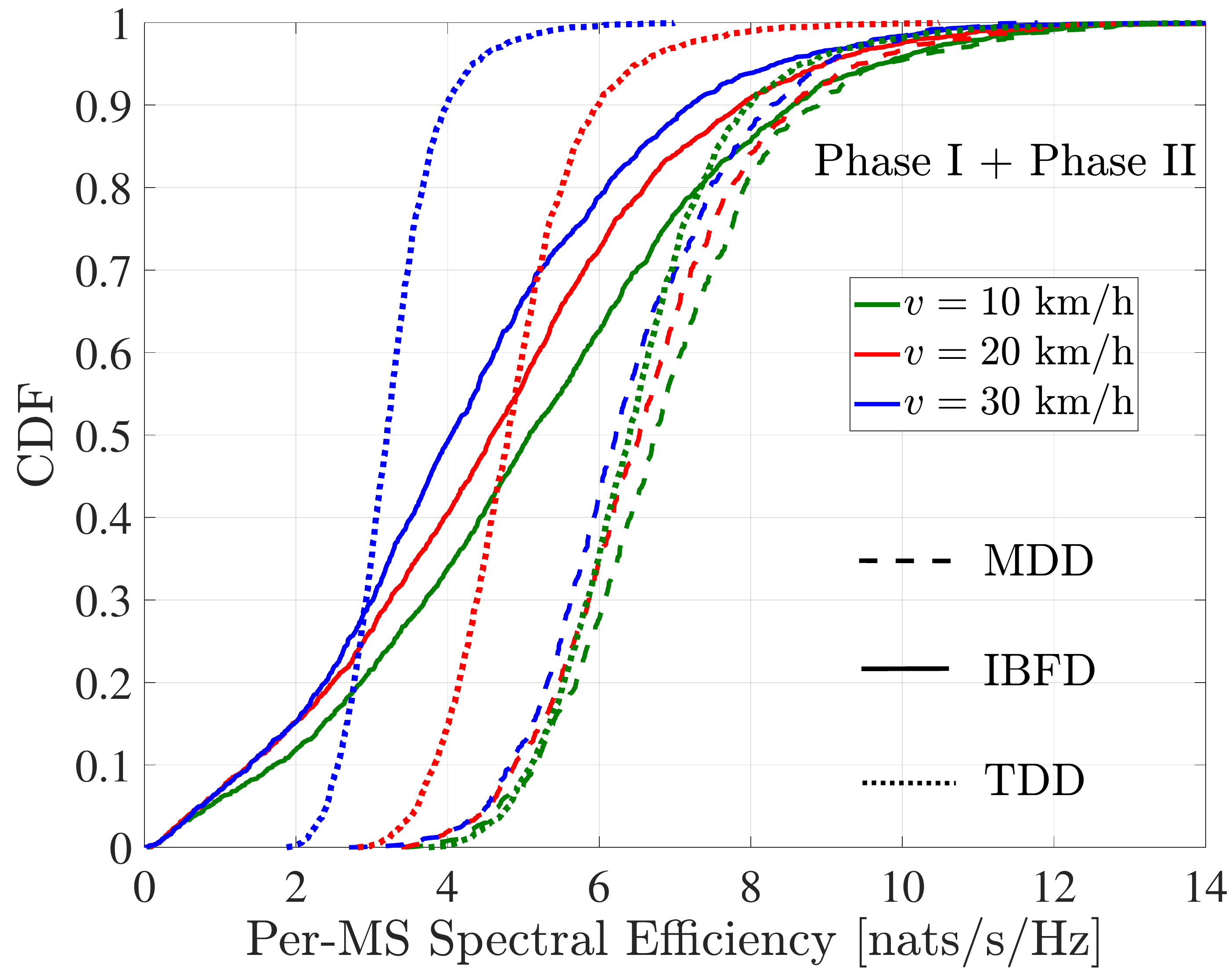}
\caption{Cumulative distribution versus per-MS SE in the case of combined Phase I and Phase II.}
\label{figure-MDDCF-fig5_2}
\end{figure}

The performance of Phase I and that of the combination of Phase I and Phase II are demonstrated in Fig. \ref{figure-MDDCF-fig5_1} and Fig. \ref{figure-MDDCF-fig5_2}, respectively. Due to the limited capability of IAI/IMI management in IBFD-CF, in order to guarantee the accuracy of channel estimation in Phase I, the APs have to cut down the DL's transmit power so as to limit the interference imposed on the pilot receiving. Consequently, as seen in Fig. \ref{figure-MDDCF-fig5_1}, the IBFD-CF attains a much poorer performance in Phase I than the MDD-CF. Furthermore, Fig. \ref{figure-MDDCF-fig5_2} shows that when the relative speed is increased to 30 km/h, the 95\%-likely per-MS SE of the MDD-CF is nearly two times higher than that of the TDD-CF. Additionally, as seen in Fig. \ref{figure-MDDCF-fig5_2}, although the IBFD-CF lags behind the TDD-CF in terms of the 95\%-likely per-MS SE, its median performance is higher than that of the TDD-CF at 30 km/h. 

%More interestingly, in the view of one compound CT interval consisted of Phase I and Phase II, if looking back on Fig. \ref{figure-MDDCF-fig1}, we can find that the gap between the performance achieved at 5 km/h and 15 km/h is narrowed in Fig. \ref{figure-MDDCF-fig5_2} for both MDD and IBFD systems. This is because, as long as the channel estimation error is well controlled in the proposed Algorithm \ref{MDDCF:al3}, the variation of performance from low-mobility to high-mobility scenarios can be much more smooth with adding Phase I supplementary DL transmissions. %Inspired by this result, Algorithm \ref{MDDCF:al3} may also be leveraged to relieve the pilot contamination problem in CF networks. In more detail, 

\begin{figure}
\centering
\includegraphics[width=0.7\linewidth]{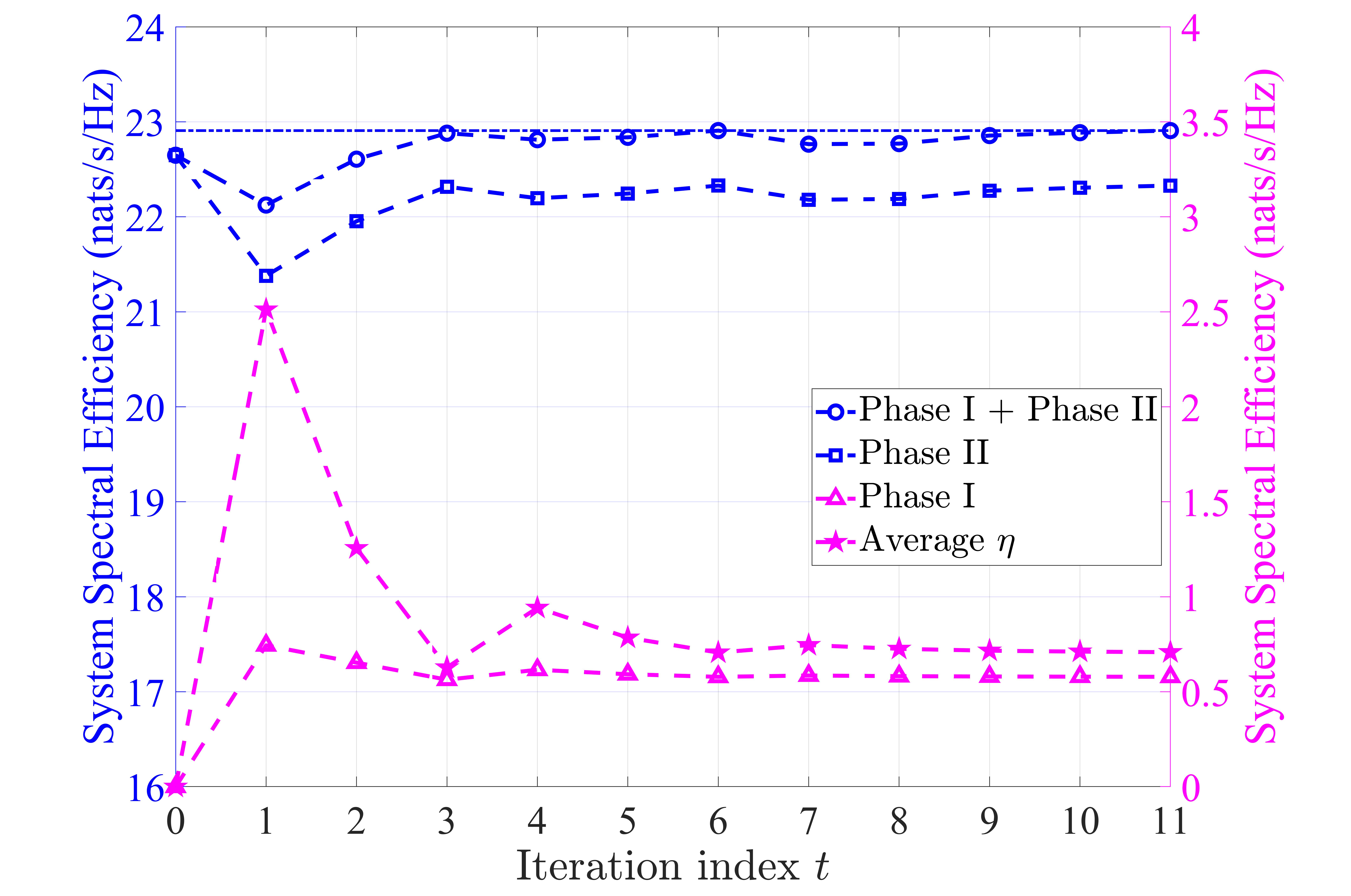}
\caption{SE convergence behavior of Algorithm 3 in MDD-CF scheme.}
\label{figure-MDDCF-fig7}
\end{figure}

Finally, the convergence of Algorithm \ref{MDDCF:al3} is shown in Fig. \ref{figure-MDDCF-fig7}. Note that $\eta$ marked in the figure denotes the average value of $\eta_l$, i.e., $\eta = \sum{\eta_l}/L$. It can be seen from the results that when $t=0$ and $\eta=0$, only Phase II is activated and there is no data transmission in Phase I. Then, as $\eta$ iterates in line 7 of Algorithm \ref{MDDCF:al3}, the general observation is that if the performance of one phase improves, the performance of the other phase degrades. The reason is that according to (\ref{eq:MDDCF:SE_sub1}c), if more power is allocated for the DL transmission in Phase I, it results in not only the SE increase but also a larger channel estimation error, which leads to the SE degradation in Phase II, and vice versa. The results demonstrate that our proposed algorithm is capable of attaining the 99\% system performance after 6 iterations, and fully converging within about 10 iterations. The complexity of Algorithm \ref{MDDCF:al3} is mainly attributed by the optimization of \eqref{eq:MDDCF:SE_sup}, which has two phases. However, as indicated by \eqref{eq:MDDCF:SE_sub1} and \eqref{eq:MDDCF:SE_sub2}, the complexity of Phase I is much lower than that of Phase II, as there are less variables to consider and also less constraints involved in the optimization of Phase I. Consequently, the complexity of Algorithm \ref{MDDCF:al3} is dominated by the operations in Phase II, which is nearly the same as Algorithm \ref{MDDCF:al1}.

\section{Conclusions}\label{sec:MDD_CF:concl}
The MDD-CF scheme has been proposed and comprehensively studied in terms of SE performance, when AP-selection, power- and subcarrier-allocation are taken into consideration. Firstly, the SE optimization problem has been studied in one single CT interval. We have exploited the interdependence of the involved variables and transformed the mixed-integer optimization to a continuous-integer convex-concave problem. In order to efficiently solve the problem, we have proposed a QT-assisted iterative algorithm, which is capable of quickly achieving a local optimum. The simulation results show that in distributed CF-mMIMO systems, MDD-CF can significantly outperform IBFD-CF due to the more efficient IAI and IMI mitigation. Secondly, the case of one radio frame with imperfect channel estimation has been studied. To fully take the advantages of FD operation in radio frame, we have designed a TPCT interval comprised of two tightly coupled phases. Then, a two-step iterative algorithm based on bisection method has been proposed to maximize the SE in TPCT interval. According to the simulation results, with the aid of TPCT interval and Algorithm \ref{MDDCF:al3}, MDD-CF is more robust to high-mobility scenarios, while the performance of TDD-CF degrades quickly with the increase of the relative speed between APs and MSs.

\appendices

\section{The simplification of $\text{SINR}_{d,m}$ and $\text{SINR}_{d,\bar{m}}$ in \eqref{eq:MDDCF:SINRSim}}\label{Appen:MDD_CF_SINRSim}
For $\text{SINR}_{d,m}$, since the ZF precoder is employed with the constraint of power normalization, the numerator of \eqref{eq:MDD_CF:dlSINR} can be transformed to $\left|\sum_{l \in \mathcal{L}}\lambda_{ld}\mu_{ldm} \sqrt{p_{ldm}}\omega_{ldm}\right|^2$, where $\omega_{ldm}=\frac{1}{\left\|\pmb{f}_{ld}^{\text{ZF}}[m]\right\|_2}$. In the denominator, $\text{MUI}_{d,m}\approx 0$ due to the ZF precoding, when $N\geq D$. Furthermore, $\text{var}\left\{z^{\text{SI}}_d\right\}$ and $\text{var}\left\{z^{\text{IMI}}_d\right\}$ can be obtained as follows
\begin{align}
\text{var}\left\{z^{\text{SI}}_d\right\}&=\mathbb{E}\left[\bar{z}^{\text{SI}}_d\left(\bar{z}^{\text{SI}}_d\right)^H\right]\overset{(a)}{=}\xi_d^{\text{SI}}\sum_{\bar{m} \in \bar{\mathcal{M}}}\mu_{d\bar{m}}p_{d\bar{m}}, \nonumber \\
\text{var}\left\{z^{\text{IMI}}_d\right\}&=\xi_d^{\text{IMI}}\mathbb{E}\left[\bar{z}^{\text{IMI}}_d\left(\bar{z}^{\text{IMI}}_d\right)^H\right] \nonumber \\
& \overset{(b)}{=}\xi_d^{\text{IMI}}\sum_{d^{\prime} \in \mathcal{D}\backslash \left\{d\right\}}\sum_{\bar{m} \in \bar{\mathcal{M}}}\frac{\beta_{dd^{\prime}}}{M_{\text{sum}}}\mu_{d^{\prime}\bar{m}}p_{d^{\prime}\bar{m}},
\end{align}
where $(a)$ is derived using $\mathbb{E}\left[h_{dd}h_{dd}^H\right]=\xi_d^{\text{SI}}$ according to \eqref{eq:MDDCF:SICh}, and $(b)$ is obtained using $\mathbb{E}\left[h_{dd^{\prime}}[\bar{m}]h_{dd^{\prime}}^H[\bar{m}]\right]=\mathbb{E}\left[(\pmb{\phi}_{\text{UL}}^T\pmb{F}\pmb{\Psi}\pmb{g}_{dd^{\prime}})(\pmb{\phi}_{\text{UL}}^T\pmb{F}\pmb{\Psi}\pmb{g}_{dd^{\prime}})^H\right]=\frac{\beta_{dd^{\prime}}}{M_{\text{sum}}}$.

For the $\text{SINR}_{d,\bar{m}}$, since $(\pmb{w}_{ld}^{\text{ZF}}[\bar{m}])^H\pmb{h}_{ld}^{\text{ZF}}[\bar{m}]\approx 1$, the term in the numerator of \eqref{eq:MDD_CF:ulSINR} can be changed to $\mu_{d\bar{m}}p_{d\bar{m}}L^2$. In the denominator, $\text{MUI}_{d,\bar{m}}\approx 0$, while the second term can be obtained as 
\begin{align}
&\sum_{l \in \mathcal{L}}\mathbb{E}\left[\left\|\left(\pmb{w}_{ld}^{\text{ZF}}[\bar{m}]\right)^H\pmb{z}^{\text{SI}}_l\right\|^2\right] \nonumber \\
&= \sum_{l \in \mathcal{L}}\text{Tr}\left[\left(\pmb{w}_{ld}^{\text{ZF}}[\bar{m}]\right)^H\text{diag}\left(\text{cov}\left\{\bar{\pmb{z}}^{\text{SI}}_l\right\}\right)\pmb{w}_{ld}^{\text{ZF}}[\bar{m}]\right]\nonumber \\
&\overset{(a)}{=}\xi_l^{\text{SI}}\sum_{l \in \mathcal{L}}\left\|\pmb{w}_{ld}^{\text{ZF}}[\bar{m}]\right\|_2^2\sum_{m \in \mathcal{M}}\sum_{d \in \mathcal{D}} \lambda_{ld}\mu_{ldm} p_{ldm},
\end{align}
where $(a)$ holds since
\begin{align}
&\left(\text{cov}\left\{\bar{\pmb{z}}^{\text{SI}}_l\right\}\right)_{i,i} \nonumber \\
&= \sum_{m \in \mathcal{M}}\sum_{d \in \mathcal{D}} \lambda_{ld}\mu_{ldm} p_{ldm}\mathbb{E}\left[\left|\pmb{H}_{ll}^{(i,:)}\pmb{f}_{ld}^{\text{ZF}}[m] x_d[m]\right|^2\right] \nonumber \\
&\overset{(b)}{=}\sum_{m \in \mathcal{M}}\sum_{d \in \mathcal{D}} \lambda_{ld}\mu_{ldm} p_{ldm}\times \nonumber \\
&\text{Tr}\left\{\pmb{f}_{ld}^{\text{ZF}}[m]\left(\pmb{f}_{ld}^{\text{ZF}}[m]\right)^H \mathbb{E}\left[\left(\pmb{H}_{ll}^{(i,:)}\right)^H\pmb{H}_{ll}^{(i,:)}\right]\right\} \nonumber \\
&\overset{(c)}{=}\xi_l^{\text{SI}}\sum_{m \in \mathcal{M}}\sum_{d \in \mathcal{D}} \lambda_{ld}\mu_{ldm} p_{ldm},
\end{align}
where we have $(b)$ due to that the ZF precoder only focuses on the desired signal, and hence the ZF precoding vector is uncorrelated with the SI channel, we have$(c)$ according to the assumption that $\left\|\pmb{f}_{ld}^{\text{ZF}}[m]\right\|_2^2=1$ and $\mathbb{E}\left[\left(\pmb{H}_{ll}^{(i,:)}\right)^H\pmb{H}_{ll}^{(i,:)}\right]=\xi_l^{\text{SI}}\pmb{I}_N$.

Similarly, the third term in the denominator of \eqref{eq:MDD_CF:ulSINR} can be obtained as
\begin{align}
&\sum_{l \in \mathcal{L}}\mathbb{E}\left[\left\|\pmb{w}_{ld}^H[\bar{m}]\pmb{z}^{\text{IAI}}_l\right\|^2\right] \nonumber \\
&=\xi_l^{\text{IAI}}\sum_{l \in \mathcal{L}}\upsilon_{ld\bar{m}}\sum_{l^{\prime} \in \mathcal{L}\backslash \left\{l\right\}}\sum_{m \in \mathcal{M}}\sum_{d \in \mathcal{D}} \frac{\beta_{ll^{\prime}}}{M_{\text{sum}}}\lambda_{l^{\prime}d}\mu_{l^{\prime}dm} p_{l^{\prime}dm},
\end{align}
where $\upsilon_{ld\bar{m}}=\left\|\pmb{w}_{ld}^{\text{ZF}}[\bar{m}]\right\|_2^2$. Consequently, $\text{SINR}_{d,m}$ and $\text{SINR}_{d,\bar{m}}$ can be simplified to the desired form as shown in \eqref{eq:MDDCF:SINRSim}.

\ifCLASSOPTIONcaptionsoff
  \newpage
\fi

% trigger a \newpage just before the given reference
% number - used to balance the columns on the last page
% adjust value as needed - may need to be readjusted if
% the document is modified later
%\IEEEtriggeratref{8}
% The "triggered" command can be changed if desired:
%\IEEEtriggercmd{\enlargethispage{-5in}}

% references section

% can use a bibliography generated by BibTeX as a .bbl file
% BibTeX documentation can be easily obtained at:
% http://mirror.ctan.org/biblio/bibtex/contrib/doc/
% The IEEEtran BibTeX style support page is at:
% http://www.michaelshell.org/tex/ieeetran/bibtex/
%\bibliographystyle{IEEEtran}
% argument is your BibTeX string definitions and bibliography database(s)
%\bibliography{IEEEabrv,../bib/paper}
%
% <OR> manually copy in the resultant .bbl file
% set second argument of \begin to the number of references
% (used to reserve space for the reference number labels box)

\bibliographystyle{IEEEtran}
\bibliography{MDDRef}

% biography section
% 
% If you have an EPS/PDF photo (graphicx package needed) extra braces are
% needed around the contents of the optional argument to biography to prevent
% the LaTeX parser from getting confused when it sees the complicated
% \includegraphics command within an optional argument. (You could create
% your own custom macro containing the \includegraphics command to make things
% simpler here.)
%\begin{IEEEbiography}[{\includegraphics[width=1in,height=1.25in,clip,keepaspectratio]{mshell}}]{Michael Shell}
% or if you just want to reserve a space for a photo:

%\begin{IEEEbiography}{Michael Shell}
%Biography text here.
%\end{IEEEbiography}

% if you will not have a photo at all:
%\begin{IEEEbiographynophoto}{John Doe}
%Biography text here.
%\end{IEEEbiographynophoto}

% insert where needed to balance the two columns on the last page with
% biographies
%\newpage

%%\begin{IEEEbiographynophoto}{Jane Doe}
%Biography text here.
%\end{IEEEbiographynophoto}

% You can push biographies down or up by placing
% a \vfill before or after them. The appropriate
% use of \vfill depends on what kind of text is
% on the last page and whether or not the columns
% are being equalized.

%\vfill

% Can be used to pull up biographies so that the bottom of the last one
% is flush with the other column.
%\enlargethispage{-5in}

% that's all folks
\end{document}